\documentclass[twocolumn,showpacs,preprintnumbers,amsmath,amssymb, superscriptaddress]{revtex4-1}

\usepackage{graphicx}
\usepackage{dcolumn}
\usepackage{dsfont}
\usepackage{amsthm}
\usepackage{bm}
\usepackage{float}
\newtheorem{theorem}{Theorem}
\newtheorem{corollary}{Corollary}[theorem]
\newtheorem{lemma}[theorem]{Lemma}

\DeclareMathOperator{\norm}{\Big|\Big|}
\DeclareMathOperator{\Li}{\mathcal{L}}

\DeclareMathOperator{\supop}{\mathcal{B}(\mathcal{B}(\mathcal{H}))}

\begin{document}

\title{Digital quantum simulation of many-body non-Markovian dynamics}

\date{\today}

\author{R. Sweke}
\email{rsweke@gmail.com}
\affiliation{Quantum Research Group, School of Physics and Chemistry, University of KwaZulu-Natal, Durban, 4001, South Africa}
\author{M. Sanz}
\affiliation{Department of Physical Chemistry, University of the Basque Country UPV/EHU, Apartado 644, E-48080 Bilbao, Spain}
\author{I. Sinayskiy}
\affiliation{Quantum Research Group, School of Physics and Chemistry, University of KwaZulu-Natal, Durban, 4001, South Africa}
\affiliation{National Institute for Theoretical Physics (NITheP), KwaZulu-Natal, South Africa}
\author{F. Petruccione}
\affiliation{Quantum Research Group, School of Physics and Chemistry, University of KwaZulu-Natal, Durban, 4001, South Africa}
\affiliation{National Institute for Theoretical Physics (NITheP), KwaZulu-Natal, South Africa}
\author{E. Solano}
\affiliation{Department of Physical Chemistry, University of the Basque Country UPV/EHU, Apartado 644, E-48080 Bilbao, Spain}
\affiliation{IKERBASQUE, Basque Foundation for Science, Maria Diaz de Haro 3, 48013 Bilbao, Spain}

\begin{abstract}

We present an algorithmic method for the digital quantum simulation of many-body locally-indivisible non-Markovian open quantum systems. It consists of two parts: Firstly, a Suzuki-Lie-Trotter decomposition of the global system propagator into the product of subsystem propagators, which may not be quantum channels, and secondly, an algorithmic procedure for the implementation of the subsystem propagators through unitary operations and measurements on a dilated space. By providing rigorous error bounds for the relevant Suzuki-Lie-Trotter decomposition, we are able to analyse the efficiency of the method, and connect it with an appropriate measure of the local indivisibility of the system. In light of our analysis, the proposed method is expected to be experimentally achievable for a variety of interesting cases.

\end{abstract}

\pacs{03.67.Ac, 03.65.Yz}
\maketitle

\section{Introduction}\label{intro}

All quantum systems are invariably in contact with an environment to some extent. Therefore, the development of tools for the study of such open quantum systems, undergoing non-unitary dynamics as a result of system-environment interactions, is of importance for understanding a rich variety of phenomena \cite{Petruccione2002, Daley2014}. Historically, effort has been focused on studying Markovian open quantum systems, whose dynamics is described by master equations in the Gorini, Kossakowski, Sudarshan and Lindblad (GKSL) form \cite{Petruccione2002, Daley2014, Rivas2012}. However, recently there has been an explosion of interest in open quantum systems beyond the Markovian regime, in which, since the typical assumptions made in deriving GKSL master equations are no longer valid, more complex, history-dependent, descriptions of the system dynamics are necessary  \cite{Breuer2015, Rivas2014, deVega2015}. 

In particular, the study of non-Markovian open quantum systems promises to allow us to better understand the nature of dissipation and decoherence \cite{Petruccione2002, Daley2014, Rivas2012, Breuer2015, Rivas2014, deVega2015, Aolita2015, Leggio2015, Man2015, Chiuri2012, Orieux2015}, thermalisation and equilibration \cite{Znidaric2010, Kastoryano2014}, non-equilibrium phase transitions \cite{Pizorn2013, Prosen2009} and transport phenomena in strongly correlated \cite{Prosen2011, Benenti2009, Prosen2012} and biological systems \cite{Mohseni2008, Rebentrost2009, Plenio2008, Marais2013, Marais2012, Dorner2012, Huelga2013}. Furthermore, within the Markovian context, it has been shown that dissipation and decoherence, traditional enemies of quantum information processing, can be exploited as a resource for quantum computation \cite{Verstraete2009, Sinayskiy2012, Pfeiffer2015}, the preparation of topological phases \cite{Budich2014, Diehl2011, Bardyn2013} and the preparation of entangled states \cite{Kraus2008, Diehl2008}. In this sense, it is desirable to understand the extent to which these protocols are robust against relaxation of the strict assumptions involved in this setting.

Simulations on controllable quantum devices promise to be one of the most effective tools for the study of open quantum systems. While a plethora of methods have been developed for the simulation of Markovian open quantum systems, on a wide variety of quantum devices \cite{Kliesch2011, Barthel2012, Tempel2014, DiCandia2014, Wang2013, Wang2014, Wang2011, Sweke2014, Sweke2015, Dive2015, Zanardi2015, Muller2011, Schindler2013, Barreiro2011, Weimer2011, Weimer2010}, there have only recently begun to emerge proposals for either classical  \cite{Prior2010, Cerrillo2014, Rosenbach2015, Woods2015} or quantum \cite{Pfeiffer2015, Cardenas2015, Brito2015, AlvarezRodriguez2016} simulation of non-Markovian open quantum systems. If one has knowledge of certain properties of the environment, then one of the most natural approaches is through methods of embedding non-Markovian open quantum systems in larger Markovian systems \cite{deVega2015, Breuer1999, Breuer2004, Budini2013}, which can then be simulated through any of the available methods. However, inspired by the recent success of digital quantum simulations in a variety of contexts \cite{Barends2015a, Barends2015b, Salathe2015}, largely based on ``Trotterization" of the system's dynamics \cite{Trotter1959, Suzuki1990, Suzuki1991}, one may wonder about the applicability of these digital methods to a class of many-body non-Markovian dynamics, to which they appear well suited. 

In this work, we present a method for the digital quantum simulation of many-body $k$-local, locally-indivisible non-Markovian open quantum systems, rigorously defined in Section \ref{setting}. The first part of this method, described in Section \ref{trottersection}, consists in a Suzuki-Lie-Trotter (SLT) decomposition \cite{Trotter1959, Suzuki1990, Suzuki1991, LasHeras2016} of the global system propagator into the product of local propagators, which due to the local-indivisibility of the system, may not be quantum channels. Generalising the work of Ref. \cite{Kliesch2011} to this context, we provide a rigorous error bound for such an SLT decomposition, allowing us to study the efficiency of the digital method. Importantly, as discussed in Section \ref{trottersection}, we cannot expect to obtain an efficient method for an arbitrary non-Markovian open quantum system, and as such the primary aim of this analysis is to understand how the efficiency of the method depends on appropriate measures of the indivisibility of the system of interest.

The second part of the method consists in an algorithmic procedure for the implementation of Hermiticity- and trace-preserving maps which are not quantum channels. Specifically, inspired by the notion of quantum instruments \cite{Watrous2016, Nielsen2000}, we show in Section \ref{NP} how any such map may be algorithmically implemented through unitary operations and measurements on a dilated space. In Section \ref{simsection}, we then combine the results of Sections \ref{trottersection} and \ref{NP}  to provide a complete method for the digital algorithmic simulation of $k$-local, locally-indivisible, non-Markovian open quantum systems. Furthermore, we provide in Section \ref{simsection} a detailed analysis of the efficiency of this method, which then allows for an assessment of whether a particular system may be feasibly simulated, given a specified set of resources. In light of this analysis, it is expected that the proposed method should be experimentally achievable for a variety of interesting cases, in particular those which are weakly indivisible with respect to the measures we define. Finally, in Section \ref{conc} we summarise our results and present an outlook on future directions.

\section{Setting}\label{setting}

We consider finite lattices $\Lambda$, consisting of $N$ lattice sites so that $|\Lambda| = N$. With each $x \in \Lambda$ there exists an associated finite Hilbert space $\mathcal{H}_x \simeq \mathbb{C}^{d_x}$, and we define $\mathcal{H}_X = \bigotimes_{x \in X}\mathcal{H}_x$ for all subsets $X \subset \Lambda$, and $\mathcal{H}\equiv\mathcal{H}_\Lambda$. For simplicity, we assume that $d_x = d$ for all $x \in \Lambda$. We denote the space of all bounded linear operators  $A:\mathcal{H} \rightarrow \mathcal{H}$ as $\mathcal{B}(\mathcal{H})$, and given $A \in \mathcal{B}(\mathcal{H})$, we define the support of $A$, denoted $\mathrm{supp}(A)$, as the smallest subset $X\subset \Lambda$ for which there exists a non-trivial $A_X \in \mathcal{B}(\mathcal{H}_X)$ such that $A = A_X\otimes \mathds{1}_{\Lambda/X}$. For any $X \subset \Lambda$, $\mathcal{B}_X(\mathcal{H}) \equiv \{A \in \mathcal{B}(\mathcal{H})| \mathrm{supp}(A)\subset X\}$ denotes the space of all bounded linear operators on $\mathcal{H}$ with support contained in $X$. Given a Liouvillian $\mathcal{L}: \mathcal{B}(\mathcal{H}) \rightarrow \mathcal{B}(\mathcal{H}) \in  \mathcal{B}(\mathcal{B}(\mathcal{H}))$, the support of $\mathcal{L}$ is given by $\mathrm{supp}(\mathcal{L}) \equiv \bigcup \{X \subset \Lambda | \mathcal{B}_{\Lambda/X}(\mathcal{H}) \subset \mathrm{ker}(\mathcal{L})\}$, which is the set of sites on which $\mathcal{L}$ generates a non-trivial time evolution, and $\mathds{L}_X = \{\mathcal{L} \in \mathcal{B}(\mathcal{B}(\mathcal{H})) | \mathrm{supp}(\mathcal{L}) \subset X\}$ is the set of Liouvillians with support in $X$

We are interested in $k$-local open many-body quantum systems described by time-local master equations. These are systems whose dynamics satisfies 

\begin{equation}\label{goveq}
\frac{d}{dt}\rho(t) = \mathcal{L}(t)[\rho(t)] =  \sum_{Z \subset \Lambda} \mathcal{L}_Z(t)[\rho(t)], 
\end{equation}
for some piece-wise continuous time-local Liouvillian $\mathcal{L}:\mathbb{R}^+ \rightarrow \mathcal{B}(\mathcal{B}(\mathcal{H}))$, which can be written as the sum of strictly $k$-local terms $\mathcal{L}_Z:\mathbb{R}^+ \rightarrow  \mathds{L}_Z$, with $Z \subset \Lambda$. Here, strict $k$-locality means that  $|Z| \leq k$ for all $\mathcal{L}_Z$ such that $\mathcal{L}_Z (t)\neq 0$ for all $t$ - i.e. each $\mathcal{L}_Z$ term of the Liouvillian acts non-trivially on at most $k$ subsystems. Given a system defined by Eq. \eqref{goveq}, we denote by $K = |\{\mathcal{L}_{Z}(t)|\mathcal{L}_Z(t) \neq 0\}| \leq N^k$, the number of strictly $k$-local terms in the decomposition of $\mathcal{L}$. Labelling the $K$ non-trivial strictly $k$-local Liouvillians then allows us to redefine Eq. \eqref{goveq} as 

\begin{equation}\label{goveq2}
\frac{d}{dt}\rho(t) = \mathcal{L}(t)[\rho(t)] =  \sum_{i = 1}^K \mathcal{L}_i(t)[\rho(t)]. 
\end{equation}
We then define the system propagators as the family of superoperators $\{T_{\mathcal{L}}(t,s)\}$ satisfying $\rho(t) = T_{\mathcal{L}}(t,s)\rho(s)$, for all $t \geq s \geq 0$. These propagators uniquely solve the initial value problem

\begin{equation}\label{IVP}
\frac{d}{dt}T_{\mathcal{L}}(t,s) = \mathcal{L}(t)T_{\mathcal{L}}(t,s), \qquad T_{\mathcal{L}}(s,s) = \mathds{1}.
\end{equation}
In addition, for each $i \in [1,K]$, we define the \textit{local} propagators $\{T_{\mathcal{L}_i}(t,s)\}$ as the family of superoperators which uniquely solve the initial value problem

\begin{equation}\label{IVPlocal}
\frac{d}{dt}T_{\mathcal{L}_i}(t,s) = \mathcal{L}_i (t)T_{\mathcal{L}_i}(t,s), \qquad T_{\mathcal{L}_i}(s,s) = \mathds{1}.
\end{equation}

In Ref. \cite{Kliesch2011}, the digital simulation of such systems has been considered, but in the case of Markovian many-body open quantum systems, where each strictly $k$-local Liouvillian can be written in the GKSL form \cite{Petruccione2002, Daley2014, Rivas2012}. More specifically, where

\begin{equation}\label{GKSLform}
\mathcal{L}_i(t)[\cdot] = -i[H_i(t), \cdot] + \sum_{j = 1}^{d^k} \gamma_{i,j}(t)\mathcal{D}(L_{i,j}(t))[\cdot], 
\end{equation} 
with
\begin{equation}\nonumber
\mathcal{D}(L_{i,j}(t))[\cdot] = L_{i,j}(t)\cdot L_{i,j}(t)^{\dagger} - \frac{1}{2}\{ L_{i,j}(t)^{\dagger}L_{i,j}(t),\cdot \}_+
\end{equation}
and with $\gamma_{i,j}(t) \geq 0 $, for all $i \in [1,K]$, $j \in [1,d^k]$ and $t \in \mathbb{R}^+$. In this case, the system is called \textit{locally divisible}, meaning that, for all $i \in [1,K]$, and for all $0 \leq s \leq t \in \mathbb{R}^+$, the local propagator $T_{\mathcal{L}_i}(t,s)$ is a quantum channel (completely positive trace preserving map) \cite{Rivas2012, Rivas2014}. In this work, we aim to go beyond this case and consider \textit{locally indivisible} dynamics described by time-local master equations, i.e. dynamics generated by a $k$-local Liouvillian as in Eq. \eqref{goveq}, but for which $T_{\mathcal{L}_i}(t,s)$ may not be a quantum channel for all $i \in [1,K]$ and for all $0 \leq s \leq t \in \mathbb{R}^+$. Time-local master equations of this type are capable of describing many non-Markovian systems \cite{Breuer2015, Rivas2014, deVega2015}, and the simplest example of such a process is given by a system whose dynamics is described by Eq. \eqref{GKSLform}, but with dissipation rates $\gamma_{i,j}(t)$ which are not necessarily positive for all $i,j$ and $t$ \cite{Petruccione2002, Rivas2012}. We also note that we do not attempt to address here the question of which $k$-local Liouvillians generate legitimate completely positive dynamics, as the simulation method given here is valid even in the case when the global dynamics is not completely positive.

In order to quantify errors made within the presented simulation scheme, we utilise the $(1\rightarrow 1)$-norm for super-operators, where in general the $(p\rightarrow q)$-norm of a super-operator $T \in \mathcal{B}(\mathcal{B}(\mathcal{H}))$ is defined as \cite{Watrous2016} 

\begin{equation}
||T||_{p\rightarrow q} := \sup_{||A||_p=1}||T(A)||_q.
\end{equation}
The $(p\rightarrow q)$-norm defined above is induced from the Schatten $p$-norm of an operator, defined as $||A||_p:= \big(\mathrm{tr}(|A|^p)\big)^{\frac{1}{p}}$, for all $A \in \mathcal{B}(\mathcal{H})$. Notice that the definition corresponds up to a factor of 1/2 with the trace distance, $\mathrm{dist}(\rho,\sigma):=\sup_{0\leq A\leq 1}\mathrm{tr}\big(A(\rho-\sigma)\big)$, arising from a physical motivation of operational distinguishability of quantum states \cite{Nielsen2000}, which is relevant when working in the Schr\"{o}dinger picture.

\section{Trotter decomposition of locally indivisible dynamics}\label{trottersection}

In line with conventional digital quantum simulation techniques \cite{Barends2015a, Barends2015b, Salathe2015}, our strategy for the simulation of locally indivisible dynamics will be to implement $T_{\mathcal{L}}(t,0)$ through stroboscopic implementations of small time slices of the strictly $k$-local propagators, formalised via a Suzuki-Lie-Trotter (SLT) decomposition of $T_{\mathcal{L}}(t,0)$ \cite{Trotter1959, Suzuki1990, Suzuki1991}. In order to evaluate the performance of this strategy, it is necessary to obtain error bounds on the relevant SLT decomposition. To this end, we aim to generalise the results obtained in Ref. \cite{Kliesch2011} to the case of locally indivisible systems. It is essential to note that we \textit{cannot} expect to obtain an efficient simulation method for arbitrary non-Markovian systems \cite{Kliesch2011}. This is largely due to the fact that, in many non-Markovian situations in which the system of interest is strongly coupled to the environment, the relevant ``system", in terms of contribution to the dynamics, is in fact the total combination of system plus environment. Therefore, it is unrealistic to expect efficient scaling with respect to the size of the system of interest. As an illustration, if efficient simulation of arbitrary non-Markovian dynamics were possible, then one could in principle imagine efficiently simulating an extremely complicated process or computation occurring in the environment, whose results can flow back into the system of interest due to the non-Markovian character of the environment.

In light of these considerations, the primary goal of our analysis will be to provide an error bound for a relevant SLT decomposition. This will allow us to understand how the efficiency of the SLT-based digital simulation method depends on various ``measures of local indivisibility" of the simulated system. Then, given a particular locally indivisible non-Markovian system, this would permit an experimentalist the ability to determine whether the resources required for such a simulation are practically feasible. 

To this end, given a super-operator $T \in \supop$, let us define the check function

\begin{equation}\nonumber
\mathrm{Ch}\big(T \big)= \begin{cases}
0, &\qquad \text{if } T \text{ is a quantum channel,} \\
1, &\qquad \text{otherwise.}
\end{cases}
\end{equation}
Note that, given a particular super-operator $T \in \supop$, the value of $\mathrm{Ch(T)}$ can in principle be determined through construction and analysis of the Choi-Jamiolkowski state \cite{Nielsen2000}. This procedure will be practical provided that the dimension of the Hilbert space $\mathcal{H}$ is relatively small. Now, given a $k$-local system specified by a Liouvillian $\mathcal{L}$, as in Eqs. \eqref{goveq} and \eqref{goveq2}, let us consider a fixed final time $t \geq 0$ and divide the time interval $[0,t]$ into $m$ subintervals of length $\Delta t \equiv t/m$, as required by any SLT scheme. Given these values of $t$ and $m$, let us then define $T^{j}_i \equiv T^{j}_{\mathcal{L}_i}(tj/m,t(j-1)/m)$ for $i \in [1,K]$ and $j \in [1,m]$. This leads to the following ``measures of local indivisibility", 

\begin{align}
\tilde{N}^m_i &= \sum_{j = 1}^m\mathrm{Ch}\Big( T^j_i \Big) \leq m, \\
\hat{N}^m_j &= \sum_{i = 1}^K\mathrm{Ch}\Big( T^j_{i} \Big) \leq K.
\end{align}
These quantities are defined such that $\tilde{N}^m_i$ measures the number of time intervals in the SLT scheme for which the propagator $T^j_{i}$ is not a quantum channel, while $\hat{N}^m_j$ measures the number of local propagators which are not quantum channels during some given time interval $[t j/m, t(j-1)/m]$. Note that, as a consequence of strict $k$-locality, it will generally be possible to calculate these measures on a conventional computer for realistic systems in which $k$ is small and independent of the total system size $N$.  Given these quantities, we then define

\begin{align}
\tilde{N}^m &= \max_{1 \leq i \leq K} \big[ \tilde{N}^m_i  \big] \leq m,\\
\hat{N}^m &= \max_{1 \leq j \leq m} \big[ \hat{N}^m_j  \big] \leq K.
\end{align}
Clearly, for locally divisible dynamics $\tilde{N}^m = \hat{N}^m = 0$. Note that all quantities defined so far depend implicitly on the discretisation factor $m$, and that naively it is possible to bound $\tilde{N}^m$ from above by $m$, which occurs in the worst case scenario when all local propagators are not quantum channels - i.e. when the system is locally ``totally indivisible". However, it is desirable to find a tighter upper bound than this and, to this end, we define

\begin{equation}
t^{\mathrm{ID}}_i = \lim_{m \rightarrow \infty} \tilde{N}^m_i \Delta t = \lim_{m \rightarrow \infty} \frac{\tilde{N}^m_i t}{m} \leq t,
\end{equation}
and $ t^{\mathrm{ID}} = \max_{1 \leq i \leq K} t^{\mathrm{ID}}_i$. Furthermore, let us define $C^m_i$, the number of ``disjoint indivisible intervals due to $\mathcal{L}_i$", via $ C^m_i = \sum_{j = 1}^{m-1}\mathrm{Seq}\big(T^j_i, T^{j+1}_i \big)$, where

\begin{equation}
\mathrm{Seq}\big(T^j_i, T^{j+1}_i \big) = \begin{cases}
1 &\quad \mathrm{Ch}(T^j) = 0 \land  \mathrm{Ch}(T^{j+1}) = 1\\
0 &\quad \text{otherwise}.\nonumber
\end{cases}
\end{equation}
In addition, let us define $C_i  = \lim_{m \rightarrow \infty} C^{m}_i$, and $\tilde{C} = 2 \max_{1 \leq i \leq K} C_i$, so that, at this stage, it is possible to see that

\begin{equation}
\tilde{N}^m \leq \min\Big( \frac{t^{\mathrm{ID}}}{\Delta t} + \tilde{C}, m  \Big).
\end{equation}
Finally, we do not want to specify a priori that our local equations of motion are in a specific form, so we specify the local quantity 

\begin{equation}
\beta = \sup_{0 \leq s \leq t} \Big( \max_{1 \leq i \leq K}\big(  || \mathcal{L}_i(s)||_{1\rightarrow 1} \big)  \Big),
\end{equation}
which allows us to state the following theorem:

\begin{theorem}\label{SLT}
Given a system whose dynamics is described by Eqs. \eqref{goveq} and \eqref{goveq2}, the error of a first order SLT decomposition of a time evolution up to time $t$ in $m$ steps is bounded by,

\begin{equation}\nonumber
\bigg|\bigg| T_{\mathcal{L}}(t,0) -  \prod_{j = 1}^m \prod_{i = 1}^K T^j_{i} \bigg|\bigg|_{1\rightarrow 1}  \leq E(m,K,\beta, t, \tilde{N}^m, \hat{N}^m), 
\end{equation}
where 
\begin{align}
E &= \frac{K^2\beta^2t^2}{m}e^{(3 + K(2 + \tilde{N}^m) + K\mathrm{min}[m,K\tilde{N}^m] + \hat{N}^m)\beta(t/m)} \nonumber\\
&\leq \frac{K^2\beta^2t^2}{m}e^{(3 + [3 + \tilde{C}]K + \tilde{C}K^2)\beta(t/m)}e^{(K + K^2)t^{\mathrm{ID}}\beta}.
\end{align}
\end{theorem}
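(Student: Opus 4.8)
The plan is to follow the telescoping strategy used in the Markovian analysis of Ref.~\cite{Kliesch2011}, modified so that local propagators failing to be quantum channels are controlled through explicit $(1\to1)$-norm estimates rather than through the unit-norm property enjoyed by channels. First I would reduce the global error to a sum of single-step errors. Writing $P_j \equiv \prod_{i=1}^K T^j_i$ for the ordered product of local propagators in the $j$-th step and using the composition property $T_{\mathcal{L}}(t,0)=\prod_{j=1}^m T_{\mathcal{L}}(tj/m,t(j-1)/m)$, the standard ``add-and-subtract'' telescoping yields
\begin{equation}\nonumber
T_{\mathcal{L}}(t,0)-\prod_{j=1}^m P_j=\sum_{j=1}^m \Big(\prod_{l=j+1}^m P_l\Big)\Big[T_{\mathcal{L}}\big(\tfrac{tj}{m},\tfrac{t(j-1)}{m}\big)-P_j\Big]T_{\mathcal{L}}\big(\tfrac{t(j-1)}{m},0\big).
\end{equation}
The triangle inequality and sub-multiplicativity of the $(1\to1)$-norm then bound the left-hand side by a sum over $j$ of the single-step error multiplied by the norms of the surrounding partial propagators.

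Next I would estimate the single-step error $||T_{\mathcal{L}}(tj/m,t(j-1)/m)-P_j||_{1\to1}$. Expanding both the exact single-step propagator (via its Dyson/Duhamel series) and the ordered product $P_j$ to second order in $\Delta t \equiv t/m$, the zeroth- and first-order contributions coincide since $\mathcal{L}=\sum_i\mathcal{L}_i$, so the leading discrepancy is second order and is governed by the double sum over pairs of local generators, hence of magnitude $O((K\beta\Delta t)^2)$. Summing this per-step estimate over the $m$ steps produces the prefactor $K^2\beta^2 t^2/m$.

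The crux is the norm bookkeeping for the surrounding factors. The guiding principle is that each local propagator contributes a factor of exactly $1$ to a partial product when it is a channel ($\mathrm{Ch}(T^j_i)=0$), and at most $e^{\beta\Delta t}$ when it is not, the latter following from $||T_{\mathcal{L}_i}(s',s)||_{1\to1}\leq \exp\!\big(\int_s^{s'}||\mathcal{L}_i(\tau)||_{1\to1}\,d\tau\big)\leq e^{\beta(s'-s)}$ together with the definition of $\beta$; the exact factors are likewise controlled using $||\mathcal{L}||_{1\to1}\leq K\beta$. I would then count the non-channel factors appearing in each partial product of the telescoping sum and bound them uniformly in $j$: the number of non-channel local terms within a single step is capped by $\hat{N}^m$, the number of non-channel steps for a fixed local term is capped by $\tilde{N}^m$, and the exact partial propagator admits either the crude bound $e^{K\beta s}\le e^{Km\beta\Delta t}$ or the tighter estimate tied to the non-channel count $K\tilde{N}^m$, the two combining into the $K\min[m,K\tilde{N}^m]$ contribution. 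Assembling these counts produces the exponent $(3+K(2+\tilde{N}^m)+K\min[m,K\tilde{N}^m]+\hat{N}^m)\beta(t/m)$ and hence the first inequality. The second inequality then follows by inserting the earlier estimate $\tilde{N}^m\leq\min(t^{\mathrm{ID}}/\Delta t+\tilde{C},m)$ and using $\hat{N}^m\leq K$, whereupon the terms proportional to $\tilde{N}^m\Delta t\to t^{\mathrm{ID}}$ separate out into the factor $e^{(K+K^2)t^{\mathrm{ID}}\beta}$.

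The main obstacle is precisely this bookkeeping, which constitutes the genuine departure from Ref.~\cite{Kliesch2011}: in the Markovian case every local propagator is a channel of unit norm, so all partial products telescope trivially and contribute nothing to the exponent, whereas here one must track exactly how many norm-$e^{\beta\Delta t}$ factors accumulate in the left product $\prod_{l>j}P_l$, in the step product $P_j$, and in the exact factor, and match these counts faithfully to the combinatorial definitions of $\tilde{N}^m$ and $\hat{N}^m$ while respecting the crude $\min$ cap against the trivial count $m$.
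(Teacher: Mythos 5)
Your skeleton is the paper's: the same telescoping of the global error into $m$ single-step errors weighted by the $(1\to1)$-norms of the surrounding partial products (the paper's Lemma 2, with the exact and Trotterized factors placed on opposite sides of the identity, which is immaterial), the same product-integral bounds $\|T_{\mathcal{L}_i}(s',s)\|_{1\to1}\le e^{\beta(s'-s)}$ for non-channels versus $=1$ for channels (Lemma 1), and the same bookkeeping of non-channel factors through $\tilde N^m$, $\hat N^m$ and the cap $K\min[m,K\tilde N^m]$ (Corollary 3.1). That part of your argument matches and is correct.

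The gap is the single-step error. You dispose of $\big\|T^j_{\mathcal{L}}-\prod_{i=1}^K T^j_i\big\|$ by saying a Dyson expansion shows the zeroth and first orders cancel and the discrepancy is ``$O((K\beta\Delta t)^2)$'', but Theorem 1 is a non-asymptotic bound with an explicit constant, and a truncated expansion does not control the remainder; in particular it cannot produce the $e^{(3+2K+\hat N^m_j)\beta\Delta t}$ sitting inside the exponent of $E$. The paper closes this in two stages that your sketch omits. First, Lemma 4 telescopes a \emph{second} time, now over the $K$ local terms within one step, reducing the single-step error to the pairwise splittings $\big\|T^j_{\sum_{z\le\phi}\mathcal{L}_z}-T^j_\phi T^j_{\sum_{z\le\phi-1}\mathcal{L}_z}\big\|$ at the cost of a factor $Ke^{\beta\hat N^m_j\Delta t}$ --- this inner telescoping is where $\hat N^m$ actually enters, not a crude count of non-channel factors in $P_j$. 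Second, Lemma 5 bounds each pairwise splitting by the exact identity expressing $T_{\mathcal{K}+\mathcal{L}}(t,s)-T_{\mathcal{K}}(t,s)T_{\mathcal{L}}(t,s)$ as $\int_s^t dr\int_s^r d\mu$ of a product of propagators, \emph{inverse} propagators and the commutator $[\mathcal{L}(r),\mathcal{K}(\mu)]$; this requires the existence and norm bound of $T^{-1}_{\mathcal{K}}$ (also Lemma 1) and yields $\tfrac12(\Delta t)^2\sup\|[\mathcal{K},\mathcal{L}]\|\,e^{(3\sup\|\mathcal{K}\|+2\sup\|\mathcal{L}\|)\Delta t}$, whence $K\beta^2(\Delta t)^2e^{(3+2K)\beta\Delta t}$ and the prefactor $K^2\beta^2t^2/m$. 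To make your proof complete you must supply both the inner telescoping and an exact remainder formula of this type; the rest of your counting then assembles into the stated $E$, and the passage to the second inequality via $\tilde N^m\le t^{\mathrm{ID}}/\Delta t+\tilde C$ and $\hat N^m\le K$ is as you describe.
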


The proof of Theorem \ref{SLT} can be found in Appendix \ref{Proof1}. Note that we also have the following important corollary:

\begin{corollary}\label{colSLT}
Given 

\begin{equation}
0 \leq \epsilon \leq \frac{(2K^2\beta t \mathrm{ln}(2)e^{(K + K^2)t^{\mathrm{ID}}\beta})}{(3 + [3 + \tilde{C}K + \tilde{C}K^2])},
\end{equation} 
then 

\begin{equation}
 \bigg|\bigg| T_{\mathcal{L}}(t,0) -  \prod_{j = 1}^m \prod_{i = 1}^K T^j_{i}\bigg|\bigg|_{1\rightarrow 1}  \leq \epsilon,
 \end{equation}
 provided $m \geq 2K\beta^2 t^2e^{(K + K^2)t^{\mathrm{ID}}\beta}/\epsilon$.
\end{corollary}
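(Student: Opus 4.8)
The plan is to deduce the corollary directly from the second, looser upper bound on $E$ furnished by Theorem \ref{SLT}, since that bound is expressed entirely in terms of the $m$-independent quantities $t^{\mathrm{ID}}$ and $\tilde{C}$, and therefore isolates cleanly the dependence on the discretisation factor $m$. Writing that bound as
\[
E \leq \frac{K^2\beta^2 t^2}{m}\, e^{A\beta(t/m)}\, e^{(K+K^2)t^{\mathrm{ID}}\beta}, \qquad A := 3 + [3+\tilde{C}]K + \tilde{C}K^2,
\]
I would first observe that the only genuinely awkward $m$-dependence sits in the factor $e^{A\beta(t/m)}$, whose exponent decreases to zero as $m \to \infty$. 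Because this exponential appears multiplied by the prefactor $1/m$, the equation $E=\epsilon$ cannot be inverted for $m$ in closed form, so rather than solving it exactly I would bound the exponential by a constant under a mild lower bound on $m$.

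The central step is the standard Trotter-style manoeuvre: impose that $m$ be large enough that $A\beta(t/m) \leq \ln 2$, i.e.\ $m \geq A\beta t/\ln 2$, which yields $e^{A\beta(t/m)} \leq 2$. Under this single assumption the bound collapses to the linear estimate
\[
E \leq \frac{2K^2\beta^2 t^2\, e^{(K+K^2)t^{\mathrm{ID}}\beta}}{m},
\]
and the desired inequality $\| T_{\mathcal{L}}(t,0) - \prod_{j=1}^m\prod_{i=1}^K T^j_i \|_{1\rightarrow 1} \leq E \leq \epsilon$ then follows as soon as $m \geq 2K^2\beta^2 t^2\, e^{(K+K^2)t^{\mathrm{ID}}\beta}/\epsilon$, which reproduces the form of the advertised lower bound on $m$ (the explicit prefactor following upon substitution of the constants).

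The only real subtlety, and the step I expect to require the most care, is reconciling the two lower bounds on $m$ that have now been imposed: the one from taming the exponential, $m \geq A\beta t/\ln 2$, and the one matching the target error, $m \geq 2K^2\beta^2 t^2 e^{(K+K^2)t^{\mathrm{ID}}\beta}/\epsilon$. For the corollary to hold with a single threshold on $m$, the second must dominate the first. Demanding $2K^2\beta^2 t^2 e^{(K+K^2)t^{\mathrm{ID}}\beta}/\epsilon \geq A\beta t/\ln 2$ and solving for $\epsilon$ gives exactly
\[
\epsilon \leq \frac{2K^2\beta t\, \ln(2)\, e^{(K+K^2)t^{\mathrm{ID}}\beta}}{A},
\]
which is precisely the hypothesis on $\epsilon$ in the corollary, with $A$ the exponential constant appearing in Theorem \ref{SLT}. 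Thus the upper bound on $\epsilon$ is not an extra assumption but the exact compatibility condition guaranteeing that any $m$ obeying $m \geq 2K^2\beta^2 t^2 e^{(K+K^2)t^{\mathrm{ID}}\beta}/\epsilon$ automatically obeys $m \geq A\beta t/\ln 2$, so the two constraints merge into one and the argument closes. Beyond this bookkeeping, the remainder is routine substitution.
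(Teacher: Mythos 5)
Your proposal is correct and follows essentially the same route as the paper's own proof: both use the second, $m$-uniform bound from Theorem \ref{SLT} and reduce matters to showing that the hypothesis on $\epsilon$ forces the residual exponential $e^{A\beta(t/m)}$ to be at most $e^{\ln 2}=2$, whereupon the prefactor $2K^2\beta^2t^2e^{(K+K^2)t^{\mathrm{ID}}\beta}/m$ is already below $\epsilon$. One remark: your derivation yields the threshold $m \geq 2K^2\beta^2 t^2 e^{(K+K^2)t^{\mathrm{ID}}\beta}/\epsilon$ with a factor $K^2$, whereas the corollary as printed has only $K$; since the paper's own displayed chain of inequalities (the step producing $\tfrac{\epsilon}{2}\exp[A\epsilon/(2K^2\beta t e^{(K+K^2)t^{\mathrm{ID}}\beta})]$) likewise requires the $K^2$ version, this is a typo in the statement rather than a defect in your argument.
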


From Corollary \ref{colSLT} (also proven in Appendix \ref{Proof1}), it is clear that, as expected, in the case of locally indivisible dynamics, the number of strictly $k$-local propagators scales exponentially in $K$, and therefore potentially exponentially in $N$ because of the the relationship $K \leq N^k$, which is not necessarily saturated. However, note that, when the dynamics is locally divisible, we have that $t^{\mathrm{ID}} = 0$. Therefore, the number of local propagators  scales polynomially in $N$, reproducing the results of Ref. \cite{Kliesch2011}. We also note that it is possible to replace the strictly $k$-local propagators $T^j_i$, with the strictly $k$-local propagators $T^{j,\mathrm{avg}}_i = \mathrm{exp}(\Delta t \mathcal{L}^{j, \mathrm{avg}}_i)$ of the \textit{averaged} Liouvillians,

\begin{equation}
\mathcal{L}^{j, \mathrm{avg}}_i = \frac{m}{t}\int_{t(j-1)/m}^{tj/m} \mathcal{L}_i(s)ds,
\end{equation}
without changing the scaling of the SLT error \cite{Kliesch2011}. Furthermore, when the Liouvillian is in GKSL form given in Eq. \eqref{GKSLform}, but possibly with negative dissipation rates at certain time, the SLT error can be expressed in terms of

\begin{equation}
\tilde{\beta} = \sup_{0 \leq s \leq t} \Big( \max_{1 \leq i \leq K}\big( \max_{1 \leq j \leq d^k} \big( || L_{i,j}(s)||_{\infty} \big)  \Big),
\end{equation}
the largest operator norm of the Lindblad operators \cite{Kliesch2011}.

At this stage, the strategy in the locally divisible case is clear, as each strictly $k$-local propagator, which is a quantum channel, can be implemented through a unitary Stinespring dilation requiring an ancilla space whose dimension depends only on $k$ and $d$ \cite{Kliesch2011, Nielsen2000, Watrous2016}. However, in the locally indivisible case, \textit{not all} local propagators are quantum channels (or even positive maps) and therefore, any realisation of an SLT scheme, such as the one provided by Theorem \ref{SLT}, requires a method for the implementation of non-positive maps.

\section{Algorithmic implementation of non-positive maps}\label{NP}

In this section, we construct a method to implement the strictly $k$-local propagators emerging from the SLT decomposition given in Theorem \ref{SLT},  which are \textit{not} quantum channels. In particular we restrict ourselves to Hermiticity- and trace preserving (HPTP), but not necessarily positive maps. Such maps would for instance arise in the case of a $k$-local system specified by a Liouvillian in GKSL form, but with negative dissipation rates for certain time intervals. As mentioned briefly in Section \ref{trottersection}, we stress that, due to \textit{strict} $k$-locality of these propagators, the support of these maps for realistic many-body systems will be sufficiently small, so that it is possible to obtain their spectrum either analytically or numerically.

Given an HPTP map $T:\mathcal{B}(\mathcal{H})\rightarrow \mathcal{B}(\mathcal{H})$, there always exists completely positive, but not necessarily trace preserving (CPnTP), maps $T^{(0)}$ and $T^{(1)}$ such that $T = T^{(0)} - T^{(1)}$. This can be proven via the spectral decomposition of the associated Choi-Jamiolkowski state \cite{Watrous2016}. As a result, we see that, if one can implement the CPnTP maps $T^{(0)}$ and $T^{(1)}$, then one can implement $T$ algorithmically. Specifically, given any initial state $\rho \in \mathcal{B}(\mathcal{H})$ and any observable $A \in \mathcal{B}(\mathcal{H})$, and defining $\rho' = T(\rho)$, we have

\begin{align}
\langle A \rangle_{\rho'} &\equiv \mathrm{tr}[A\rho']\nonumber\\
&= \mathrm{tr}[AT^{(0)}(\rho)] - \mathrm{tr}[AT^{(1)}(\rho)] \nonumber\\
&= \mathrm{tr}[A\rho'_{(0)}] - \mathrm{tr}[A\rho'_{(1)}] \nonumber\\
&= \langle A \rangle_{\rho'_{(0)}}  - \langle A \rangle_{\rho'_{(1)}}, \nonumber
\end{align}
i.e. expectation values of the desired state $\rho'$ can be algorithmically reconstructed from the expectation values of the outputs $\rho'_{(0)}$ and $\rho'_{(1)}$ of CPnTP maps $T^{(0)}$ and $T^{(1)}$. In light of this, we are able to restrict our attention to constructing a method for the implementation of CPnTP maps.

To this end, let us consider a CPnTP map $T^{(x)}:\mathcal{B}(\mathcal{H}_S) \rightarrow \mathcal{B}(\mathcal{H}_S)$, with $\mathcal{H}_S \simeq \mathbb{C}^d$, and Kraus representation $\{K_x^{(i)}\}|_{i = 1}^{d_x}$, where $d_x \leq d^2$. Furthermore, let us define the ``gauge" $G_x$ via 

\begin{equation}
G_x = \sum_{i = 1}^{d_x} (K_x^{(i)})^{\dagger}K_x^{(i)}. 
\end{equation}
As $T^{(x)}$ is not trace preserving, we know that $G_x\neq \mathds{1}$. At this stage, we can identify two cases: case 1 is when the gauge $G_x$ is sub-normalised, $\mathds{1} - G_x \geq 0$, and case 2 is when the gauge $G_x$ is not sub-normalised. More specifically, let us define $g_x = \mathrm{max}\big[\mathrm{spec}(G_x)\big]$. From the structure of $G_x$ (Hermitian and positive semi-definite), we know that that $g_x \geq 0$. Then, we are in case 1 when $g_x \leq 1$, and in case 2 otherwise. If we are in case 2, then we can define the ``renormalised" map $\hat{T}^{(x)}$ via Kraus operators $\{\hat{K}_x^{(i)}\}|_{i = 1}^{d_x}$, where $\hat{K}_x^{(i)} = (1/\sqrt{g_x}) K_x^{(i)}$.  Let us denote the gauge of $\hat{T}^{(x)}$ as $\hat{G}_x$, and note that 

\begin{equation}
\hat{G}_x = \frac{1}{g_x}G_x,
\end{equation}
so that $\hat{G}_x$ is sub-normalised by construction. Furthermore, note that for all $\rho \in \mathcal{B}(\mathcal{H}_S)$, we have that

\begin{equation}
T^{(x)}(\rho) = g_x\hat{T}^{(x)}(\rho),
\end{equation}
so that, if we can implement $\hat{T}^{(x)}$, then $T^{(x)}$ can be implemented algorithmically.

Given this setup, the problem considered here is the following:\newline

\noindent\textbf{Problem:} Given a CPnTP map  $T^{(x)}:\mathcal{B}(\mathcal{H}_S) \rightarrow \mathcal{B}(\mathcal{H}_S)$, with $\mathcal{H}_S \simeq \mathbb{C}^d$, and an observable $A \in \mathcal{B}(\mathcal{H}_S)$, and given multiple copies of $\rho \in \mathcal{B}(\mathcal{H}_S)$ (i.e. from some standard preparation procedure or preliminary circuit), describe an algorithmic procedure which yields $\langle A\rangle_{\rho'_{(x)}}$, where
\begin{equation}
\langle A\rangle_{\rho'_{(x)}} = \mathrm{Tr}[A\rho_{(x)}'],
\end{equation}
with $\rho'_{(x)}= T^{(x)}(\rho)$.\newline

If we first restrict ourselves to case 1, then the protocol described below, inspired by the notion of quantum instruments \cite{Nielsen2000, Watrous2016}, provides a solution to the problem.\newline

\noindent\textbf{Step 1:} Construct $K_x^{(\infty)}$ such that

\begin{equation}
G_x + (K_x^{(\infty)})^{\dagger}K_x^{(\infty)} = \mathds{1} 
\end{equation}
Note that the existence of $K_x^{(\infty)}$  is guaranteed by virtue of the assumed sub-normalisation of $G_x$. Furthermore, note that through the inclusion of the additional Kraus operator $K_x^{(\infty)}$, we can extend $T^{(x)}$ to a map $T^{(x)}_e$ which is both completely positive \textit{and} trace preserving.\newline

\noindent\textbf{Step 2:} Construct the unitary operator $U_x \in \mathcal{B}(\mathbb{C}^{d_x + 1}\otimes\mathcal{H}_S )$ via

\begin{equation}\label{stine} 
U_x = \begin{pmatrix}
K_x^{(1)} & \vdots & \vdots & \vdots \\
\vdots & \vdots & \vdots& \vdots \\
K_x^{(d_x)} & \vdots &\vdots & \vdots \\
K_x^{(\infty)} & \vdots &\vdots & \vdots \\
\end{pmatrix}. 
\end{equation}
Note that $U_x$ is precisely the Stinespring dilation of $T^{(x)}_e$, the trace preserving extension of $T^{(x)}$, with dilation space $\mathcal{H}_{E_x} \simeq \mathbb{C}^{d_x + 1}$.\newline

\noindent\textbf{Step 3:} Define the set of projectors $ P^{(x)} = \{P^{(x)}_1,P^{(x)}_2\}$ via

\begin{align}
P^{(x)}_1 &= \sum_{j = 1}^{d_x}\sum_{k = 1}^{d} |j,k\rangle\langle j,k | \\
 P^{(x)}_2 &= \sum_{k = 1}^d |d_x + 1,k\rangle\langle d_x + 1,k |
\end{align}
where $\{|j,k\rangle\}|_{j = 1}^{d_x + 1}|_{k = 1}^d$ is the basis for $\mathcal{H}_{E_x}\otimes\mathcal{H}_S $ in which $U_x$ is given.\newline

\noindent\textbf{Step 4:} Note now that, if one starts with the state $|1\rangle\langle 1 | \otimes \rho$, applies the unitary $U_x$, and then performs the measurement defined by $P^{(x)}$, then the probability of obtaining ``measurement outcome 1" is given by

\begin{equation}\label{need1}
 \mathrm{Pr}_x(1) \equiv \mathrm{Tr}[P^{(x)}_1U_x(|1\rangle\langle 1 | \otimes \rho)U_x^{\dagger}P^{(x)}_1] ,
\end{equation}
in which case the reduced state of the system is

\begin{equation}\label{out1}
\rho' = \mathrm{Tr}_{E_x}\bigg[\frac{P^{(x)}_1U_x(|1\rangle\langle 1 | \otimes \rho)U_x^{\dagger}P^{(x)}_1}{\mathrm{Tr}[P^{(x)}_1U_x(|1\rangle\langle 1 | \otimes \rho)U_x^{\dagger}P^{(x)}_1]}\bigg].
\end{equation}
Furthermore, note that by construction
\begin{equation}\label{need2}
T^{(x)}(\rho) =  \mathrm{Tr}_{E_x}\big[P^{(x)}_1U_x(|1\rangle\langle 1 | \otimes \rho)U_x^{\dagger}P^{(x)}_1\big],
\end{equation}
so that we can rewrite Eq. \eqref{out1}, with the help of Eqs. \eqref{need1} and \eqref{need2}, as

\begin{equation}
\rho' = \frac{T^{(x)}(\rho)}{\mathrm{Pr}_x(1)},
\end{equation}
or alternatively
\begin{equation}\label{corrected}
\rho'_{(x)} = \mathrm{Pr}_x(1)\rho'.
\end{equation}

\noindent\textbf{Step 5 (case 1):} Finally, note that via Eq. \eqref{corrected}

\begin{align}
\langle A\rangle_{\rho'_{(x)}} &= \mathrm{Tr}[A\rho'_{(x)}]\nonumber\\
& = \mathrm{Pr}_x(1)\mathrm{Tr}[A\rho'] \nonumber\\
& =  \mathrm{Pr}_x(1)\langle A\rangle_{\rho'}.
\end{align}
Now, $\langle A\rangle_{\rho'}$ can be obtained from the state $\rho'$, which in turn can be produced through unitary evolution of a dilated system via $U_x$, followed by the measurement $P^{(x)}$, and postselecting on ``measurement outcome 1". Furthermore, the constant $\mathrm{Pr}_x(1)$  can be asymptotically obtained through repetitions of the process of unitary evolution and measurement (with the same initial state each time), by recording the proportion of ``measurement outcome 1"  to ``measurement outcome 2". To sum up, through unitary evolutions and measurements of a dilated system, it is possible to obtain algorithmically the desired value of $\langle A\rangle_{\rho'}$, provided the assumption of sub-normalisation holds.\newline

Now, let us consider the case when sub-normalisation is not satisfied, i.e. case 2. In this case, we can repeat steps 1 through 4, not for the map $T^{(x)}$, but for the ``renormalised" map $\hat{T}^{(x)}$. Finally, we slightly modify step 5, where the hats now just indicate the relevant object defined from $\hat{T}^{(x)}$, as opposed to $T^{(x)}$:\newline

\noindent\textbf{Step 5 (case 2):} Note that

\begin{align}
\langle A\rangle_{\rho'_{(x)}} &= \mathrm{Tr}[Ag_x\hat{T}^{(x)}(\rho)]\nonumber\\
& = g_x\mathrm{Pr}_x(1)\mathrm{Tr}[A\hat{\rho}'] \nonumber\\
& =  g_x\hat{\mathrm{Pr}}_x(1)\langle A\rangle_{\hat{\rho}'}.
\end{align}
Now, $\langle A\rangle_{\hat{\rho}'}$ can be obtained from the state $\hat{\rho}'$ which, again, can be produced through unitary evolution of a dilated system via $\hat{U}_x$, followed by the measurement $P^{(x)}$. Again, the constant $\hat{\mathrm{Pr}}_x(1)$  can be obtained asymptotically through repetitions of the process of unitary evolution and measurement.\newline

Clearly, for this protocol to work, it is necessary to obtain the value of the constant $\mathrm{Pr}_x(1)$. In a practical setting, it is necessary to construct some estimator $\mathrm{Pr}^{N_T}_x(1)$ for $\mathrm{Pr}_x(1)$ from a finite number of measurements $N_T$. The error in approximating the desired output state $\rho'_{(x)} = \mathrm{Pr}_x(1)\rho'$ with $\tilde{\rho}'_{(x)} = \mathrm{Pr}^{N_T}_x(1)\rho'$ is then given by 

\begin{align}
||\rho'_{(x)} - \tilde{\rho}'_{(x)} || &= |\mathrm{Pr}_x(1) -  \mathrm{Pr}^{N_T}_x(1)| || \rho'|| \\ &= |\mathrm{Pr}_x(1) -  \mathrm{Pr}^{N_T}_x(1)|.
\end{align}
Therefore, given some error threshold $ \epsilon \geq 0$, it is necessary to determine the minimum number of repetitions of the process of unitary evolution and measurement which are necessary to construct an estimator $\mathrm{Pr}^{N_T}_x(1)$ such that $|\mathrm{Pr}_x(1) -  \mathrm{Pr}^{N_T}_x(1)| \leq \epsilon$. Given that the measurement $P$ only has two possible outcomes, this is essentially the problem of constructing a binomial proportion confidence interval.

As discussed in Refs. \cite{Brown2001, Wallis2013}, in order to construct an interval with reliable properties for a potentially small number of trials, or a value of $\mathrm{Pr}_x(1)$ which is potentially close to either $0$ or $1$, it is necessary to use the Wilson score interval \cite{Wilson1927}. Formally, let us denote the number trials in which measurement outcome 1 is observed as $N_1$, and define the proportion $\hat{p} = N_1/N_T$. Furthermore, the maximum error associated with our estimator will be associated with some confidence level, given by the $z$-value of a standard normal distribution, and denoted here as $z$. The Wilson score interval then prescribes that the best estimate  $\mathrm{Pr}^{N_T}_x(1)$ is given by 

\begin{equation}\label{bestestimate}
\mathrm{Pr}^{N_T}_x(1) = \frac{\hat{p} + \frac{1}{2N_T}z^2}{1 + \frac{1}{N_T}z^2},
\end{equation}
with a confidence interval $[\mathrm{Pr}^{N_T}_x(1) - E_z, \mathrm{Pr}^{N_T}_x(1)  + E_z ]$, where

\begin{equation}\label{errorex}
E_z = \frac{z\sqrt{\frac{1}{N_T}\hat{p}(1 - \hat{p}) + \frac{1}{4N^2_T}z^2}}{1 + \frac{1}{N_T}z^2}.
\end{equation}
As an example, given a $z$-value $z = 4.42$, associated with a $99.99\%$ confidence \cite{Brown2001}, this means that we will have 

\begin{equation}
|\mathrm{Pr}_x(1) -  \mathrm{Pr}^{N_T}_x(1)| \leq E_z
\end{equation}
$99.99\%$ of the times in which such an estimator is constructed. Therefore, given a maximum error tolerance of $\epsilon \geq 0$, one can show, by noting that the right hand side of Eq. \eqref{errorex} is maximized for $\hat{p} = 1/2$, that 

\begin{equation}
|\mathrm{Pr}_x(1) -  \mathrm{Pr}^{N_T}_x(1)| \leq \epsilon,
\end{equation}
with the confidence level associated with $z$, provided that

\begin{equation}\label{trialcondition}
\frac{N^2_T}{N_T + z^2} \geq \frac{z^2}{4\epsilon^2}.
\end{equation}
Note, from Eqs. \eqref{bestestimate}, \eqref{errorex} and \eqref{trialcondition}, that in the large $N_T$ limit the best estimate is given by $\mathrm{Pr}^{N_T}_x(1) = \hat{p}$, and the condition given by Eq. \eqref{trialcondition} becomes

\begin{equation}
N_T \geq \frac{z^2}{4\epsilon^2},
\end{equation}
which is what one would expect from using the more intuitive Wald confidence interval \cite{Brown2001, Wallis2013}.

At this stage, we have therefore obtained a complete algorithmic procedure for the approximate implementation of an arbitrary HPTP super-operator. In the following section, we proceed to combine this technique with the results of Section \ref{trottersection}, in order to formulate a complete procedure for the simulation of $k$-local locally indivisible dynamics.

\section{Algorithmic digital simulation of locally indivisible dynamics}\label{simsection}

In this section, we present an algorithmic digital method for the implementation of 

\begin{equation}
\tilde{T} \equiv  \prod_{j = 1}^m \prod_{i = 1}^K T^j_{i},
\end{equation}
with $m$ fixed by Corollary \ref{colSLT}. In order to develop a concise notation, let us define a multi-index $\gamma = (j,i) \in [1,mK]$, such that $\tilde{T}$ can be rewritten as

\begin{equation}
\tilde{T} = \prod_{\gamma = 1}^{mK} T^{\gamma}.
\end{equation}
We stress that the $\gamma$ indexes the strictly $k$-local propagators in the SLT decomposition, and does not indicate an exponent. Given this notation, we will then say that $T^j_i$ is the $n$th non-CP map if $T^j_i$ is non-CP (i.e. $\mathrm{Ch}(T^j_i) = 1$) and $\sum_{\gamma = 1}^{(j,i)}\mathrm{Ch}(T^{\gamma}) = n$. Furthermore, if $T^{\gamma}$ is non-CP, but HPTP, as we are assuming all non-CP strictly $k$-local propagators are, then we denote the decomposition of $T^{\gamma}$ into the difference of CPnTP maps, as shown in Section \ref{NP}, via $T^{\gamma} =  T^{\gamma,0} - T^{\gamma,1}$. In addition, it will be useful for us to define $\beta (x,n)$ as the $n$th element of the binary representation of non-negative integer $x$, and 

\begin{figure*}[t] 
\includegraphics[scale = 0.6]{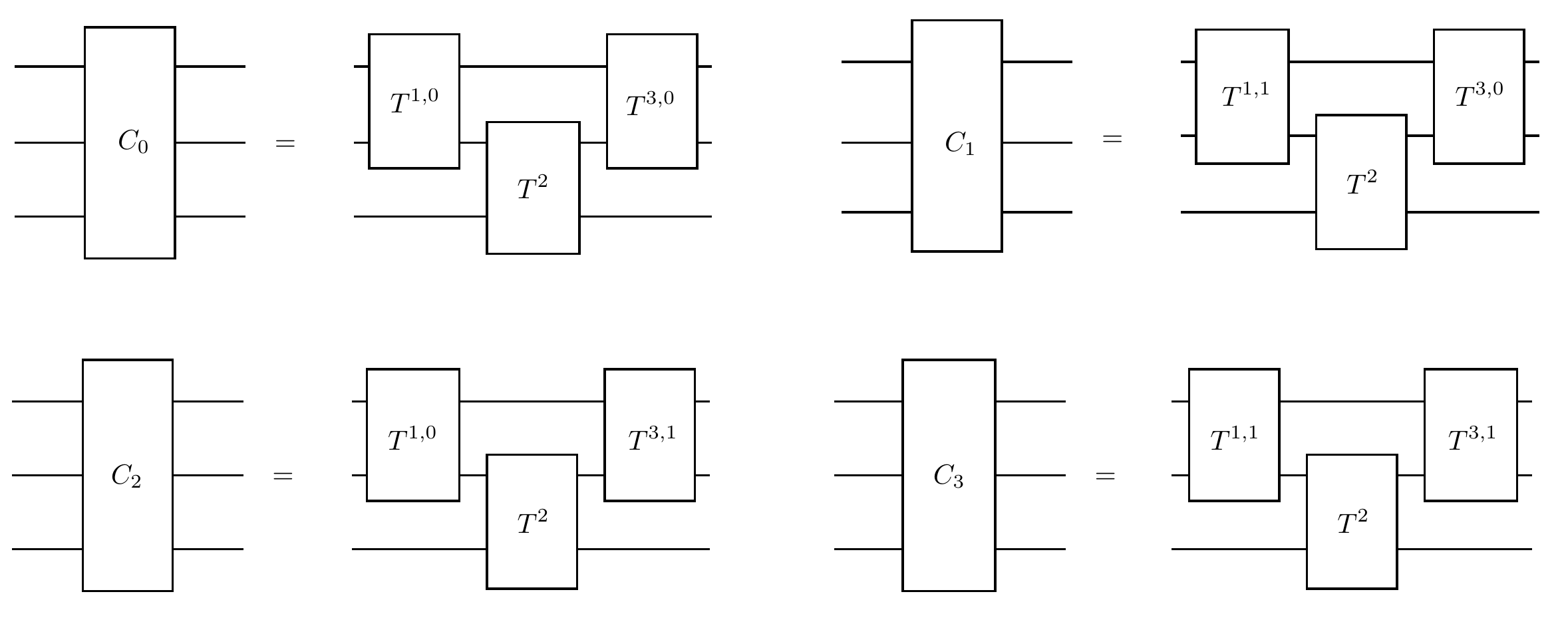}
\caption{Given a 2-local global propagator $\tilde{T} = T^3T^2T^1$, with $\mathrm{Ch}(T^1) = \mathrm{Ch}(T^3) = 1$ and $\mathrm{Ch}(T^2) = 0$, this propagator can be implemented algorithmically through the implementation of the four circuits $\{C_r\}|_{r = 0}^3$. Each circuit $C_r$ consists only of quantum channels and CPnTP maps.}\label{figure1} 
\end{figure*}

\begin{equation}
f_x(T^{\gamma}) = \begin{cases}
T^{\gamma}, & \text{if } \mathrm{Ch}(T^{\gamma}) = 0,\\
T^{\gamma, \beta(x,n)}, & \text{if } T^{\gamma} \text{ is the $n$'th non-CP map.} \nonumber
\end{cases}
\end{equation}
Defining the total number of non-CP maps appearing in the decomposition of $\tilde{T}$ as $\tilde{N}^m_{\mathrm{TOT}} = \sum_{i = 1}^K \tilde{N}^m_i$, allows us to define the $r$th circuit, denoted $C_r$ and consisting only of quantum channels and CPnTP maps, as

\begin{equation}\label{cdef}
C_r = \prod_{\gamma = 1}^{mK} f_r(T^{\gamma}),
\end{equation}
where $r \in [0,2^{\tilde{N}^m_{\mathrm{TOT}}} -1]$. Finally, defining the parity function $\mathcal{P}$ as

\begin{equation}\nonumber
\mathcal{P}(r) = 
\begin{cases}
1, & \text{if the binary representation of} \\
&\qquad \text{$r$ has an odd number of 1's,} \\
0, & \text{otherwise.} \\
\end{cases}
\end{equation}
allows us to obtain the expression

\begin{equation}\label{circuitdecomp}
\tilde{T} = \sum_{r = 0}^{2^{\tilde{N}^m_{\mathrm{TOT}}} -1} (-1)^{\mathcal{P}(r)}C_r.
\end{equation}
In essence, Eq. \eqref{circuitdecomp} shows how $\tilde{T}$ can be implemented algorithmically through the implementation of circuits consisting only of quantum channels and CPnTP maps. In other words, given an initial state $\rho(0)$ and an observable $A$, and defining $\tilde{\rho}(t) = \tilde{T}(\rho(0))$ and $\rho^{(r)}(t) = C_r(\rho(0))$, it follows from Eq. \eqref{circuitdecomp} that

\begin{align}
\langle A \rangle_{\tilde{\rho}(t)} &= \mathrm{tr}[A\tilde{\rho}(t)] \nonumber\\
& = \sum_{r = 0}^{2^{\tilde{N}^m_{\mathrm{TOT}}} -1} (-1)^{\mathcal{P}(r)}\mathrm{tr}[A\rho^{(r)}(t)] \nonumber\\
& = \sum_{r = 0}^{2^{\tilde{N}^m_{\mathrm{TOT}}} -1} (-1)^{\mathcal{P}(r)}\langle A \rangle_{\rho^{(r)}(t)}\label{alg2}, 
\end{align}
i.e. expectation values of the desired state $\tilde{\rho}(t)$ can be reconstructed from the expectation values of $\rho^{(r)}(t)$, the outputs of circuits $C_r$.

As an example, illustrated in Fig. \ref{figure1}, let us consider a two-local global propagator $\tilde{T} = T^3T^2T^1$ acting on a total system of three sites. Here, $T^2$ is a quantum channel acting non-trivially on sites 2 and 3, so $\mathrm{Ch}(T^2) = 0$, while $T^1$ and $T^3$, acting non-trivially on sites 1 and 2, are not quantum channels, so $\mathrm{Ch}(T^1) = \mathrm{Ch}(T^3) = 1$. In this simple situation. we have that the total number of non-CP maps is two. Therefore, given an initial state $\rho(0)$, the expectation values of the state $\tilde{\rho}(t) = \tilde{T}(\rho(0))$ can be reconstructed algorithmically, via Eq. \eqref{circuitdecomp}, from the states $\rho^{r}(t) = C_r(\rho(0))$ for $r \in [0:3]$. 

At this stage, what remains to be done is to incorporate explicitly into this algorithmic procedure for implementing $\tilde{T}$, the implementation of CPnTP maps within the circuits $C_r$. To this end, given a CPnTP map $T^{\gamma,i}$, with $i \in \{0,1\}$, let us denote the associated sub-normalised map as $T_s^{\gamma,i} = (1/g^{\gamma,i})T^{\gamma,i}$ and the associated CPTP extension of $T_s^{\gamma,i}$ as $T_e^{\gamma,i}$. We then denote the unitary Stinespring dilation of $T_e^{\gamma,i}$, constructed as per Eq. \eqref{stine}, as $U^{\gamma,i}$. Furthermore, given an arbitrary state $\rho$, we denote the output of a successful trial by

\begin{align}
A^{\gamma,i}(\rho) &= \frac{\mathrm{tr}_{E^{\gamma,i}} \big[P^{\gamma,i}_1U^{\gamma,i}(|1\rangle\langle 1 | \otimes \rho)(U^{\gamma,i})^{\dagger}P^{\gamma,i}_1\big]}{N^{\gamma,i}(\rho)} \nonumber\\
& = \frac{T_s^{\gamma,i}(\rho)}{N^{\gamma,i}(\rho)} \nonumber\\
& = \frac{T^{\gamma,i}(\rho)}{g^{\gamma,i}N^{\gamma,i}(\rho)}, \label{halfpoint}
\end{align}
where $P^{\gamma,i}_1$ is defined as per Step 3 of the procedure described in Section \ref{NP}, and we denote the probability of measurement outcome 1 on the input state $|1\rangle\langle 1 | \otimes \rho$ by $N^{\gamma,i}(\rho)$. Note at this stage, from Eq. \eqref{halfpoint}, that

\begin{figure*}[t] 
\includegraphics[scale = 0.7]{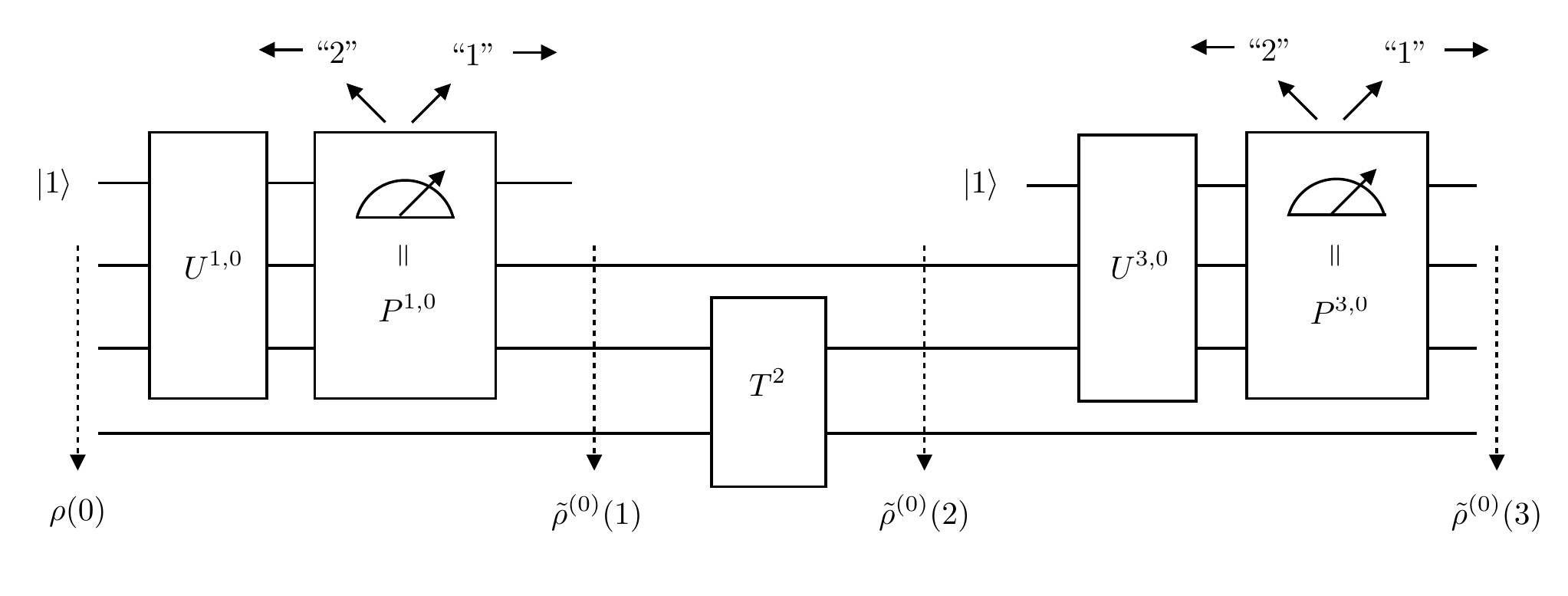}
\caption{Considering the same example as shown in Figure \ref{figure1}, the circuit shown here illustrates the method, given by Eq. \eqref{constr}, for constructing $\tilde{\rho}^{(0)}(3)$, from which the state $\rho^{(0)}(t) = C_0(\rho(0))$ can be algorithmically reconstructed. Starting with the specified initial state $\rho(0)$, the first CPnTP map $T^{1,0}$ is implemented, as described in Section \ref{NP}, through a unitary operation $U^{1,0}$ and a measurement $P^{1,0}$ on a dilated space. It is crucial to note that if measurement outcome 1 is obtained when performing the measurement, then the correct state $\tilde{\rho}^{(0)}(1)$ has been obtained and the procedure can continue, but if measurement outcome 2 is obtained then the procedure needs to be restarted. The quantum channel $T^2$ can then be implemented straightforwardly, via a conventional Stinespring dilation (not shown), before the second CPnTP map $T^{3,0}$ is implemented, analogously to $T^{1,0}$.}\label{figure2} 
\end{figure*}

\begin{equation}
T^{\gamma,i}(\rho) = g^{\gamma,i}N^{\gamma,i}(\rho)A^{\gamma,i}(\rho).
\end{equation}
Now, let us define 

\begin{equation}
M^{\gamma}_r(\rho) = \begin{cases}
T^{\gamma}(\rho), &\text{if } \mathrm{Ch}(T^{\gamma}) = 0,\\
A^{\gamma,\beta(r,n)}(\rho), &\text{if } T^{\gamma} \text{ is the $n$'th non-CP map}.\nonumber
\end{cases}
\end{equation}
Given an initial state $\rho(0)$, then 

\begin{equation}\label{form1}
\tilde{\rho}^{(r)}(j) = \prod_{\gamma = 1}^j M^{\gamma}_r(\rho(0)),
\end{equation}
such that

\begin{equation}\label{form2}
\rho^{(r)}(t) = C_r(\rho(0)) = \tilde{\rho}^{(r)}(mK)\Big( \prod_{\gamma = 1}^{mK} \mathcal{G}^{\gamma}_r \mathcal{N}^{\gamma}_r\Big),
\end{equation}
where

\begin{equation}
\mathcal{G}^{\gamma}_r = \begin{cases}
1, & \text{if } \mathrm{Ch}(T^{\gamma}) = 0,\\
1, & \text{if } T^{\gamma,\beta(r,n)} \text{is sub-normalised},\\
g^{\gamma,\beta(r,n)}, & \text{otherwise}, \nonumber
\end{cases}
\end{equation}
and

\begin{equation}
\mathcal{N}^{\gamma}_r = \begin{cases}
1, & \text{if } \mathrm{Ch}(T^{\gamma}) = 0,\\
N^{\gamma, \beta(r,n)}(\tilde{\rho}^r(\gamma -1)), & \text{otherwise}.\nonumber
\end{cases}
\end{equation}
Note that Eqs. \eqref{form1} and \eqref{form2} formalise the algorithmic procedure to implement the circuits $C_r$, by means of (a) quantum channels and (b) unitary operations and measurements on a dilated space. The quantum channels can be straightforwardly implemented via unitary Stinespring dilations \cite{Kliesch2011}. At this stage, combining Eqs. \eqref{circuitdecomp}, \eqref{form1} and \eqref{form2}, we end up with the expression

\begin{align}
\tilde{\rho}(t) &= \tilde{T}(\rho(0)) \nonumber\\
&= \Bigg(\sum_{r = 0}^{2^{\tilde{N}^m_{\mathrm{TOT}}} -1} (-1)^{\mathcal{P}(r)} \Big[ \prod_{\gamma = 1}^{mK} \mathcal{G}^{\gamma}_r \mathcal{N}^{\gamma}_r\Big]\Big[\prod_{\gamma = 1}^j M^{\gamma}_r \Big]\Bigg)\rho(0).\nonumber
\end{align}
Via a similar analysis to Eq. \eqref{alg2}, it is therefore clear that expectation values of the desired output state $\tilde{\rho}(t)$ can be algorithmically reconstructed from the expectation values of the states

\begin{equation}\label{constr}
\tilde{\rho}^{(r)}(mK) = \prod_{\gamma = 1}^{mK} M^{\gamma}_r(\rho(0)).
\end{equation}
These states can be obtained through unitary operations and measurements, involving ancillary spaces whose dimensions, independent of $N$, depend only on $d$ and $k$. Using the same example illustrated in Fig. \ref{figure1}, this procedure of obtaining $\tilde{\rho}^{(r)}(mK)$ is shown in Figure \ref{figure2}, for the case of $r = 0$. 

From Eq. \eqref{form2}, it is clear that the algorithmic reconstruction of the states $\rho^{r}(t)$ from the states $\tilde{\rho}^{(r)}(mK)$,  requires knowledge of the constants $\mathcal{N}^{\gamma}_r$. However, as discussed in Section \ref{NP}, we approximate in practice the states $\rho^{r}(t)$ with the states

\begin{equation}
\phi^{(r)}(t) = \tilde{\rho}^{(r)}(mK)\Big( \prod_{\gamma = 1}^{mK} \mathcal{G}^{\gamma}_r \tilde{\mathcal{N}}^{\gamma}_r\Big),
\end{equation}
where $\tilde{\mathcal{N}}^{\gamma}_r$ is an estimator for $\mathcal{N}^{\gamma}_r$, constructed from a finite number of measurements. The final output of the algorithmic procedure described here, an approximation of the desired state $\tilde{\rho}(t) = \tilde{T}(\rho(0))$, is therefore the state

\begin{equation}
\tilde{\phi}(t) = \sum_{r = 0}^{2^{\tilde{N}^m_{\mathrm{TOT}}} -1} (-1)^{\mathcal{P}(r)} \phi^{(r)}(t).
\end{equation}
One can then show that the algorithmic error made in approximating $\tilde{\rho}(t)$ with $\tilde{\phi}(t)$ is bounded by 

\begin{equation}
||\tilde{\rho}(t) -\tilde{\phi}(t) || \leq 2^{\tilde{N}^m_{\mathrm{TOT}} -1} \max_{r} || \rho^{(r)}(t) - \phi^{(r)}(t)||,
\end{equation}
and that the error made in approximating $\rho^{(r)}(t)$ with $\phi^{(r)}(t)$ is bounded by 

\begin{equation}\nonumber
||\rho^{(r)}(t) - \phi^{(r)}(t)|| \leq G_r \tilde{N}^{m}_{\mathrm{TOT}}\max_{1 \leq \gamma \leq mK}| \mathcal{N}^{\gamma}_r - \tilde{\mathcal{N}}^{\gamma}_r  |,
\end{equation}
with $G_r \equiv \prod_{\gamma = 1}^{mK}\mathcal{G}^{\gamma}_r$. As a result, if one requires that the total algorithmic error is less than $\epsilon_A \geq 0$, one then needs to ensure that  

\begin{equation}\label{ercond}
\max_r \Big( \max_\gamma \big( | \mathcal{N}^{\gamma}_r - \tilde{\mathcal{N}}^{\gamma}_r  |    \big)  \Big) \leq \frac{\epsilon_A}{G \tilde{N}^{m}_{\mathrm{TOT}} 2^{\tilde{N}^m_{\mathrm{TOT}} -1} },
\end{equation}
where $G \equiv \max_r \big[G_r\big]$. From Eqs. \eqref{trialcondition} and \eqref{ercond}, it is then straightforward to calculate the number of trials necessary to obtain a sufficiently accurate estimator $\tilde{\mathcal{N}}^{\gamma}_r$. 

At this stage, given an initial state $\rho(0)$, if the Trotterization error, given by Theorem \ref{SLT}, is less than $\epsilon_T$, i.e. if $\big|\big| T_{\mathcal{L}}(t,0) -  \tilde{T} \big|\big|_{1\rightarrow 1} \leq \epsilon_T$, and if the algorithmic error associated with implementing $\tilde{T}$ is less than $\epsilon_A$, then the \textit{total} error will be upper bounded by

\begin{equation}
|| \rho(t) -\tilde{\phi}(t) || \leq \epsilon_T + \epsilon_A.
\end{equation}
Therefore, if one requires a total error less than $\epsilon$, it suffices to choose $m$ such that $\epsilon_T \leq \epsilon/2$, via Corollary \ref{colSLT}, and the number of trials required for the construction of the estimators $\mathcal{N}^{\gamma}_r$, via Eqs. \eqref{trialcondition} and \eqref{ercond}, such that $\epsilon_A \leq \epsilon/2$.

Finally, it is necessary to make some comments regarding the efficiency of the method. As discussed earlier, it is not expected to obtain an efficient method for an arbitrary locally-indivisible system. Indeed, from the analysis above one can see that the number of strictly $k$-local CPnTP maps which need to be implemented, given by  $(2^{\tilde{N}^M_{\mathrm{TOT}}})mK$, where $m$ is given by Corollary \ref{colSLT}, depends strongly on the the indivisibility of the system as measured by $\tilde{N}^M_{\mathrm{TOT}}$ and $t^{\mathrm{ID}}$. Furthermore, as a result of the algorithmic procedure for implementing the non-CP strictly $k$-local propagators, each circuit $C_r$ in fact needs to be successfully implemented a number of times, given by  Eqs. \eqref{trialcondition} and \eqref{ercond}, to construct the required estimators $\tilde{\mathcal{N}}^{\gamma}_r$. However, as pointed out earlier, it is crucial to note that, as a result of the definition of $A^{\gamma,i}$, a \textit{successful implementation} of the circuit $C_r$ requires that all measurements involved in the circuit result in ``measurement outcome 1". Therefore, the probability of achieving a successful implementation of circuit $C_r$ is given by $P(C_r) = \prod_{\gamma = 1}^{mK}\mathcal{N}^{\gamma}_r$, with

\begin{equation}
\Big(\max_{1 \leq \gamma \leq mk} \big[\mathcal{N}^{\gamma}_r \big]\Big)^{\tilde{N}^{m}_{\mathrm{TOT}}} \geq P(C_r) \geq \Big(\min_{1 \leq \gamma \leq mk} \big[\mathcal{N}^{\gamma}_r \big]\Big)^{\tilde{N}^{m}_{\mathrm{TOT}}}.\nonumber
\end{equation}
In practice, as $\mathcal{N}^{\gamma, i}(\rho) \equiv \mathrm{tr}(T_s^{\gamma,i})(\rho)$, the value of $\mathcal{N}^{\gamma}_r$ can be estimated by implementing the strictly $k$-local propagator $T_s^{\gamma, \mathcal{B}(r,n)}$ on a classical computer for a random selection of inputs $\rho$, and by taking the average value of output traces. This estimated value of $\mathcal{N}^{\gamma}_r$, in conjunction with the value of $t^{\mathrm{ID}}$ and $\tilde{N}^{m}_{\mathrm{TOT}}$, can then be used to decide whether the algorithmic procedure given here is plausible for the system of interest.

\section{Conclusions and Outlook}\label{conc}

We have presented an algorithmic digital quantum simulation method for many-body locally-indivisible non-Markovian open quantum systems. The method consists of an SLT decomposition of the $k$-local global system propagator into the product of strictly $k$-local propagators, which may not be quantum channels. In this case, we also provide an algorithmic method for the implementation of those strictly $k$-local propagators which are not quantum channels, through unitary operations and measurements on a dilated space. The efficiency of the method, which reduces to the method of Ref. \cite{Kliesch2011} in the case of locally divisible dynamics, expectably depends on various measures of the local indivisibility of the system. For systems which are weakly indivisible, with respect to the measures defined here, this method should be achievable with current experimental setups \cite{Barends2015a,Barends2015b}.

In light of these results, various natural avenues arise for the extension of this work. The first direction consists in investigating any potential improvements that could be gained from utilising higher order SLT decompositions \cite{Suzuki1990, Suzuki1991}. However, as discussed in Refs. \cite{Barends2015a, Barends2015b, Salathe2015}, due to practical experimental constraints on gate implementation, any such analysis needs to take into account the tradeoff which arises between greater accuracy in the SLT decomposition and a larger number of required gates. The second natural direction involves investigating alternative or improved methods for the implementation of strictly $k$-local propagators which are not quantum channels. In particular, it would be of interest to construct methods for the simulation of maps which are not necessarily Hermiticity and trace preserving.

Finally, given the necessary inefficiency of digital methods for the simulation of non-Markovian systems, it would be of interest to investigate the potential of digital-analog approaches \cite{Arrazola2016, Mezzacapo2014, GarciaAlvarez2015}. In particular, it would be of interest to investigate whether efficient simulations are possible through the utilisation of non-Markovian analog building blocks, such as recently introduced quantum memristors \cite{Pfeiffer2015, Salmilehto2016}, combined with digital steps. Furthermore, one should investigate whether such efficient simulations could play any role in the emerging field of quantum machine learning \cite{Schuld2015}, where purely digital approaches may be restricted by fundamental obstacles.

 \begin{acknowledgments}
This work is based upon research supported by the South African Research Chair Initiative of the Department of Science and  Technology and National Research Foundation. RS acknowledges the financial support of the National Research Foundation SARChI program. MS and ES acknowledge funding from Spanish MINECO/FEDER FIS2015-69983-P and UPV/EHU UFI 11/55.
\end{acknowledgments}

\begin{appendix}
\section{Proof of Theorem 1}\label{Proof1}

In this appendix we will provide a proof for Theorem \ref{SLT}, and the associated Corollary \ref{colSLT}, through a sequence of lemmas, following the strategy given in \cite{Kliesch2011}, but generalised to the case of locally indivisible dynamics where necessary. In what follows, for notational convenience, we will drop the subscript $1\rightarrow 1$ notation from all super-operator norms, as well as the subscript 1 for operator norms. In addition, given a $k$-local system described by Eqs. \eqref{goveq} and \eqref{goveq2}, and using the same notation as in Theorem \ref{SLT}, we will define

\begin{equation}
\xi \equiv \bigg|\bigg| T_{\mathcal{L}}(t,0) -  \prod_{j = 1}^m \prod_{i = 1}^K T^j_{i} \bigg|\bigg|.
\end{equation}

Given this notation we can then state our first lemma, which will allow us to bound the norms of both local and global propagators.

\begin{lemma}\label{productintegrals}
Given $T_{\mathcal{L}}(t,s) \in \mathcal{B}(\mathcal{B}(\mathcal{H}))$, which solves the initial value problem \eqref{IVP} for some piecewise continuous Liouvillian $\mathcal{L}:\mathbb{R}^{+} \rightarrow \mathcal{B}(\mathcal{B}(\mathcal{H}))$ and some $0 \leq s \leq t$, then $\big(T_{\mathcal{L}}(t,s)\big)^{-1}$ exists, denoted $T^{-1}_{\mathcal{L}}(t,s)$, and we have that 

\begin{align}
|| T_{\mathcal{L}}(t,s)|| &\leq \mathrm{exp}\bigg[ \int_s^t || \Li(r)|| dr   \bigg]  \\  &
\leq \mathrm{exp}\bigg[ (t -s)\bigg( \sup_{s \leq r \leq t} ||\mathcal{L}(r)||  \bigg)  \bigg], \\
\end{align}
and
\begin{align}
|| T^{-1}_{\mathcal{L}}(t,s)|| &\leq \mathrm{exp}\bigg[ \int_s^t || \Li(r)|| dr   \bigg] \\   &\leq \mathrm{exp}\bigg[ (t - s)\bigg( \sup_{s\leq r \leq t} ||\mathcal{L}(r)||  \bigg)  \bigg], 
\end{align}
Furthermore, if $T_{\mathcal{L}}(t,s)$ is a quantum channel (CPTP), then we have that $||T_{\mathcal{L}}(t,s)|| = 1$.
\end{lemma}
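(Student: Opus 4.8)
The plan is to establish all the stated bounds from the integral form of the initial value problem \eqref{IVP} together with a Gr\"{o}nwall-type argument, treating the forward propagator and its inverse in parallel. Since $\mathcal{H}$ is finite-dimensional, $\supop$ is a finite-dimensional Banach algebra, and the bounded, piecewise-continuous generator $\mathcal{L}$ guarantees, via Carath\'{e}odory's existence theorem, that \eqref{IVP} admits a unique absolutely continuous solution; hence every object appearing below is well-defined.

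First I would rewrite \eqref{IVP} in integral form as $T_{\mathcal{L}}(t,s) = \mathds{1} + \int_s^t \mathcal{L}(r)\,T_{\mathcal{L}}(r,s)\,dr$. Taking the norm and using submultiplicativity together with the triangle inequality for integrals yields $||T_{\mathcal{L}}(t,s)|| \leq 1 + \int_s^t ||\mathcal{L}(r)||\,||T_{\mathcal{L}}(r,s)||\,dr$. The integral form of Gr\"{o}nwall's inequality then gives the first bound $||T_{\mathcal{L}}(t,s)|| \leq \exp\big[\int_s^t ||\mathcal{L}(r)||\,dr\big]$, and the second bound follows at once by replacing $||\mathcal{L}(r)||$ under the integral with its supremum over $[s,t]$. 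Equivalently, one may expand $T_{\mathcal{L}}(t,s)$ as a convergent time-ordered Dyson series and bound its $n$-th term by $\frac{1}{n!}\big(\int_s^t ||\mathcal{L}(r)||\,dr\big)^n$, which sums to the same exponential.

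For invertibility I would introduce the backward propagator $W(t)$ as the unique solution of the adjoint initial value problem $\frac{d}{dt}W(t) = -W(t)\mathcal{L}(t)$ with $W(s) = \mathds{1}$. Differentiating the product $W(t)T_{\mathcal{L}}(t,s)$ and substituting both equations of motion shows $\frac{d}{dt}\big[W(t)T_{\mathcal{L}}(t,s)\big] = 0$; since this product equals $\mathds{1}$ at $t = s$, it equals $\mathds{1}$ for all $t \geq s$, so $W(t) = T^{-1}_{\mathcal{L}}(t,s)$ exists, with finite-dimensionality upgrading this left inverse to a genuine two-sided inverse. Writing the adjoint equation in integral form as $W(t) = \mathds{1} - \int_s^t W(r)\mathcal{L}(r)\,dr$ and repeating the Gr\"{o}nwall argument verbatim produces the identical bound $||T^{-1}_{\mathcal{L}}(t,s)|| \leq \exp\big[\int_s^t ||\mathcal{L}(r)||\,dr\big]$, together with its supremum-based weakening.

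For the final claim, when $T_{\mathcal{L}}(t,s)$ is CPTP I would establish both inequalities. The lower bound $||T_{\mathcal{L}}(t,s)|| \geq 1$ follows by feeding in any density operator $\rho$: then $||\rho||_1 = \mathrm{tr}(\rho) = 1$, while positivity and trace preservation give $||T_{\mathcal{L}}(t,s)(\rho)||_1 = \mathrm{tr}[T_{\mathcal{L}}(t,s)(\rho)] = 1$. For the upper bound I would decompose an arbitrary Hermitian $A$ with $||A||_1 = 1$ into positive and negative parts, $A = A_+ - A_-$ with $A_\pm \geq 0$ and $||A||_1 = \mathrm{tr}(A_+) + \mathrm{tr}(A_-)$; positivity of the channel then gives $||T_{\mathcal{L}}(t,s)(A)||_1 \leq \mathrm{tr}(A_+) + \mathrm{tr}(A_-) = 1$, and reducing the defining supremum to Hermitian inputs closes the argument. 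The main obstacle I anticipate is technical rather than conceptual: rigorously justifying differentiation of the product $W(t)T_{\mathcal{L}}(t,s)$ and the existence and uniqueness of solutions when $\mathcal{L}$ is only piecewise continuous, both of which are handled cleanly by finite-dimensionality and the Carath\'{e}odory framework, together with the reduction of the $(1\to 1)$-norm supremum to Hermitian (indeed positive) inputs in the CPTP step.
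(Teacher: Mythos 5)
Your argument is correct, but it is not the route the paper takes: the paper gives no self-contained proof of this lemma at all, deferring entirely to Ref.~\cite{Kliesch2011} and to the product-integral calculus of Dollard and Friedman \cite{Dollard1977}, where $T_{\mathcal{L}}(t,s)$ is represented as the product integral $\prod_s^t e^{\mathcal{L}(r)\,dr}$; in that framework the bound $\|T_{\mathcal{L}}(t,s)\|\leq e^{\int_s^t\|\mathcal{L}(r)\|dr}$, the existence of the inverse (the reversed product integral of $-\mathcal{L}$), and the identical bound on $\|T_{\mathcal{L}}^{-1}(t,s)\|$ are all standard properties of product integrals. Your replacement --- the Volterra integral form of \eqref{IVP} plus Gr\"onwall for the forward bound, the backward propagator $\dot W=-W\mathcal{L}$ with $\frac{d}{dt}[WT]=0$ for invertibility, and Gr\"onwall again for the inverse --- is more elementary and entirely self-contained, at the price of re-deriving what the product-integral machinery packages up; both are valid, and in finite dimensions your Carath\'eodory/absolute-continuity remarks dispose of the piecewise-continuity issue correctly. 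The one step where you should be careful is the final CPTP claim: the naive splitting $A=H+iK$ into Hermitian parts only yields $\|T\|_{1\to1}\leq 2$, so ``reducing the supremum to Hermitian inputs'' with no loss is itself a nontrivial fact about positive maps. It does hold --- either via Watrous's theorem that the induced trace norm of a positive map is attained on density operators, or by duality together with Russo--Dye, $\|T\|_{1\to1}=\|T^{*}\|_{\infty\to\infty}=\|T^{*}(\mathds{1})\|_\infty=1$ for positive trace-preserving $T$ --- but it deserves an explicit citation rather than being treated as a routine reduction; with that reference supplied, your positive/negative-part computation combined with the lower bound from a density operator correctly gives $\|T_{\mathcal{L}}(t,s)\|_{1\to1}=1$.
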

The proof of lemma \ref{productintegrals} can be found in \cite{Kliesch2011} using properties of product integrals given in \cite{Dollard1977}. We can now proceed to begin to construct a bound on $\xi$ via the following lemma:

\begin{lemma}\label{two}
Given a $k$-local system, described by Eqs. \eqref{goveq} and \eqref{goveq2}, we have that  
 
 \begin{equation}
\xi \leq \bigg[ \sum_{j = 0}^{m-1}\bigg( \Big[ \prod_{l = j+2}^m P^{l}_2\Big]P^{j}_1\bigg)\bigg] \bigg(\max_{1 \leq j \leq m}\Big|\Big|T^j_{\mathcal{L}} - \prod_{i = 1}^KT^j_{i}\Big|\Big|\bigg),
 \end{equation}
 where
 
 \begin{equation}
 P^{\alpha}_1 = \norm \prod_{j = 1}^{\alpha}\prod_{i = 1}^K T^{j}_{i}\norm, \qquad P^{\alpha}_2 = \norm T^{\alpha}_{\Li}\norm,
\end{equation}
 and $P^0_1 = 1$.
\end{lemma}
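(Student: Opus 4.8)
The plan is to exploit the fact that the exact global propagator factorises over the $m$ time slices, and then to compare the two ordered products slice by slice via a telescoping identity. First I would invoke the composition property of the two-parameter propagators,
\[
T_{\mathcal{L}}(t,0) = \prod_{j=1}^m T^j_{\mathcal{L}}, \qquad T^j_{\mathcal{L}} \equiv T_{\mathcal{L}}(tj/m,\,t(j-1)/m),
\]
where the product is time-ordered with later slices on the left. This identity rests on the uniqueness of solutions to the initial value problem \eqref{IVP}: for $r \leq s \leq t$, the maps $t \mapsto T_{\mathcal{L}}(t,s)T_{\mathcal{L}}(s,r)$ and $t \mapsto T_{\mathcal{L}}(t,r)$ solve the same equation with the same datum $T_{\mathcal{L}}(s,r)$ at $t=s$, hence coincide; iterating yields the factorisation. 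Writing $G^j \equiv T^j_{\mathcal{L}}$ and $S^j \equiv \prod_{i=1}^K T^j_i$, the quantity to bound becomes $\xi = \big\| \prod_{j=1}^m G^j - \prod_{j=1}^m S^j \big\|$.

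Second, I would decompose this difference of ordered products into a telescoping sum that isolates the per-slice discrepancy $G^j - S^j$:
\[
\prod_{j=1}^m G^j - \prod_{j=1}^m S^j = \sum_{j=1}^m \Big(\prod_{l=j+1}^m G^l\Big)\,(G^j - S^j)\,\Big(\prod_{l=1}^{j-1} S^l\Big),
\]
with empty products read as the identity. Validity follows by noting that the right-hand side telescopes: in each summand the $S^j$ cancels against the $G^j$ appearing in the neighbouring summand, leaving only the two extremal products $\prod_j G^j$ and $\prod_j S^j$.

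Third, I would apply the triangle inequality together with submultiplicativity of the induced $(1\rightarrow 1)$-norm, $\|AB\|\leq \|A\|\|B\|$, giving
\[
\xi \leq \sum_{j=1}^m \Big(\prod_{l=j+1}^m \|G^l\|\Big)\,\|G^j - S^j\|\,\Big\|\prod_{l=1}^{j-1} S^l\Big\|.
\]
By definition $\|G^l\| = P^l_2$, while $\big\|\prod_{l=1}^{j-1} S^l\big\| = \big\|\prod_{l=1}^{j-1}\prod_{i=1}^K T^l_i\big\| = P^{j-1}_1$, the convention $P^0_1 = 1$ covering the $j=1$ empty product. Bounding each discrepancy by $\max_{1\leq j\leq m}\|T^j_{\mathcal{L}} - \prod_{i=1}^K T^j_i\|$ and pulling this factor out of the sum, the index shift $j \mapsto j-1$ sends $P^{j-1}_1 \mapsto P^j_1$ and $\prod_{l=j+1}^m \mapsto \prod_{l=j+2}^m$, reproducing the claimed bound exactly.

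The argument is essentially bookkeeping, so I anticipate no genuine obstacle; the two points requiring care are the justification of the composition identity (which uses only uniqueness for \eqref{IVP}, already available from the setup) and the time-ordering in the telescoping sum. In particular, one must keep the ``future'' factors $\prod_{l>j} G^l$ on the left and the ``past'' factors $\prod_{l<j} S^l$ on the right, and bound them asymmetrically: the left tail by a product of individual-slice norms $P^l_2$, but the right head by the single combined norm $P^{j-1}_1$. This asymmetry is precisely what yields the stated form and, as exploited in the subsequent lemmas, keeps the estimate tight, since $P^{j-1}_1 = 1$ whenever the accumulated local propagators compose to a channel.
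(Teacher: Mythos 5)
Your proof is correct and follows essentially the same route as the paper: the paper obtains the identical bound by peeling off the last time slice via $AB-CD=(A-C)D+A(B-D)$ and iterating, which is precisely the closed-form telescoping identity you write down, with the same asymmetric treatment (future exact factors bounded individually by $P^l_2$, past Trotterized product bounded as a whole by $P^{j-1}_1$). The only cosmetic difference is that you state the telescoping sum explicitly and justify the slice factorisation of $T_{\mathcal{L}}(t,0)$ via uniqueness for the initial value problem, which the paper leaves implicit.
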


\begin{proof} Using the same strategy as in Ref. \cite{Kliesch2011}, but taking note that $||T^j_i|| \neq 1$ for all $i,j$, we find that
\begin{align}
 \xi &= \norm \prod_{j  = 1}^mT^{j}_{\Li} - \prod_{j = 1}^m\prod_{i = 1}^K T^{j}_{i}\norm \label{startit1}\\
 & = \norm  T^m_{\Li}\prod_{j = 1}^{m-1}T^j_{\Li} - \Big(\prod_{i = 1}^K T^m_{i}\Big)\Big( \prod_{j = 1}^{m-1}\prod_{i = 1}^K T^{j}_{i}  \Big)    \norm \nonumber \\
 & \leq \norm \prod_{j = 1}^{m-1}\prod_{i = 1}^K T^{j}_{i} \norm \norm T^m_{\Li} - \prod_{i = 1}^K T^m_{i}  \norm \nonumber \\
 & \qquad+ \norm T^m_{\Li} \norm \norm \prod_{j  = 1}^{m-1}T^{j}_{\Li} - \prod_{j = 1}^{m-1}\prod_{i = 1}^K T^{j}_{i} \norm \label{ini}\\
 &\leq \sum_{j = 0}^{m-1}\bigg[ \bigg( \Big[ \prod_{l = j+2}^m P^{l}_2\Big]P^{j}_1\bigg) \Big|\Big|T^{j+1}_{\mathcal{L}} - \prod_{i = 1}^KT^{j+1}_{i}\Big|\Big|  \bigg] \label{endit1} \\
 &\leq \bigg[ \sum_{j = 0}^{m-1}\bigg( \Big[ \prod_{l = j+2}^m P^{l}_2\Big]P^{j}_1\bigg)\bigg]\bigg(\max_{1 \leq j \leq m}\Big|\Big|T^j_{\mathcal{L}} - \prod_{i = 1}^K T^j_{i}\Big|\Big|\bigg). \nonumber
 \end{align}
 In the above, \eqref{endit1} follows from \eqref{ini} by comparing the last norm on line \eqref{ini} with the right hand side of \eqref{startit1}, and then iterating. 
 \end{proof}
 
We now note that

\begin{align}
P^{\alpha}_1 &\leq \prod_{i = 1}^K\prod_{j = 1}^{m} \norm T^{j}_{i}\norm \label{st1} \\
&\leq \prod_{i = 1}^K \Big( \big[  e^{\beta\Delta t}  \big]^{\tilde{N}^m_i}  \Big) \label{en1}\\
&\leq e^{ K\beta \tilde{N}^m\Delta t } \\
& \leq e^{ K\beta m\Delta t },
\end{align}
and
\begin{align}
\prod_{i = j+2}^m P^i_2 &\leq \prod_{i = 1}^m \norm T^i_{\Li} \norm  \label{st2}\\
&\leq \prod_{i = 1}^{\min(K\tilde{N}^m,m)} e^{\beta K \Delta t} \label{en2}\\
&\leq e^{K\beta [\min(K\tilde{N}^m,m)]\Delta t} \\
& \leq e^{ K\beta m\Delta t },
\end{align}
where \eqref{en1} follows from \eqref{st1}, and \eqref{en2} follows from \eqref{st2} via Lemma \ref{productintegrals} and the definition of $\beta$. As a result of the above observations and the statement of Lemma \ref{two} we then get the following corollary:

\begin{corollary}\label{col3}
Given a $k$-local system, described by Eqs. \eqref{goveq} and \eqref{goveq2}, we have that 
\begin{align}
\xi &\leq m \bigg(  e^{[\min(K\tilde{N}^m,m) + \tilde{N}^m]K\beta\Delta t}     \bigg)\nonumber\\&\qquad
\bigg(\max_{1 \leq j \leq m}\Big|\Big|T^j_{\mathcal{L}} - \prod_{i = 1}^K T^j_{i}\Big|\Big|\bigg). \nonumber
\end{align}
\end{corollary}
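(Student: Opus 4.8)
The plan is to substitute the two uniform exponential estimates for $P_1^{\alpha}$ and for $\prod_{l=j+2}^m P_2^l$ directly into the bound furnished by Lemma \ref{two}, and then to exploit the fact that neither estimate depends on the summation index $j$ in order to collapse the outer sum over $j$ into a single factor of $m$.

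First I would invoke the chain \eqref{st1}--\eqref{en1}, which yields $P_1^j \le e^{K\beta\tilde{N}^m\Delta t}$ uniformly in $j$, together with the chain \eqref{st2}--\eqref{en2}, which yields $\prod_{l=j+2}^m P_2^l \le e^{K\beta[\min(K\tilde{N}^m,m)]\Delta t}$, again with a bound independent of $j$. Multiplying these two estimates shows that each bracketed summand appearing in Lemma \ref{two},
\[
\Big[\prod_{l=j+2}^m P_2^l\Big]P_1^j,
\]
is dominated by the single quantity $e^{[\tilde{N}^m+\min(K\tilde{N}^m,m)]K\beta\Delta t}$ for every $j\in\{0,\dots,m-1\}$.

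Next, because the outer sum in Lemma \ref{two} ranges over exactly $m$ values of $j$ and each term obeys this same exponential bound, I would replace the entire bracketed prefactor by $m\,e^{[\min(K\tilde{N}^m,m)+\tilde{N}^m]K\beta\Delta t}$. Retaining the $\max_{1\le j\le m}\|T^j_{\mathcal{L}}-\prod_{i=1}^K T^j_i\|$ factor that Lemma \ref{two} has already isolated then reproduces the claimed inequality for $\xi$ verbatim, after a trivial reordering of the two summands in the exponent.

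The only step demanding attention -- and the nearest thing here to a genuine obstacle -- is confirming that both exponential estimates are truly uniform in $j$, so that passing from a sum of $m$ distinct terms to $m$ copies of one term is legitimate. This is in fact immediate, since the derivations \eqref{st1}--\eqref{en2} deliberately discard all $j$-dependence by extending each product to the full index range $1,\dots,m$ before bounding the number of non-channel factors. Everything else is mere recombination of exponents and carries no difficulty.
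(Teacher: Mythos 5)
Your proposal is correct and matches the paper's own argument exactly: the paper likewise derives the $j$-independent bounds $P_1^j \le e^{K\beta\tilde{N}^m\Delta t}$ (via \eqref{st1}--\eqref{en1}) and $\prod_{l=j+2}^m P_2^l \le e^{K\beta[\min(K\tilde{N}^m,m)]\Delta t}$ (via \eqref{st2}--\eqref{en2}), and then inserts them into Lemma \ref{two}, collapsing the sum over the $m$ values of $j$ into the prefactor $m$. Nothing further is needed.
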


From Corollary \ref{col3}, it is clear that, to proceed, it is necessary to bound the quantity $||T^j_\mathcal{L} - \prod_{i = 1}^K T^j_i||$. Such a bound is provided by the following lemma:

\begin{lemma}\label{lemint}
Using the notation and setting of Sections \ref{setting} and \ref{trottersection} we have that

\begin{align}
||T^j_\mathcal{L} - \prod_{i = 1}^K T^j_i|| & \leq \Big(Ke^{\beta \hat{N}^m_j\Delta t}\Big) \nonumber\\ &\quad \times\Big(\max_{2 \leq \phi \leq K}||T^j_{\sum_{z = 1}^{\phi}\mathcal{L}_{z}} - T^j_{\phi}T^j_{\sum_{z = 1}^{\phi - 1}\mathcal{L}_{z}}   ||\Big).\nonumber
\end{align}
\end{lemma}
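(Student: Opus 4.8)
The plan is to establish the bound by a telescoping decomposition in which the local propagators $T^j_i$ are peeled off one factor at a time, so that the single-slice global error collapses onto a sum of the nearest-neighbour errors $||T^j_{\sum_{z=1}^\phi \mathcal{L}_z} - T^j_\phi T^j_{\sum_{z=1}^{\phi-1}\mathcal{L}_z}||$ that appear on the right-hand side. First I would introduce the family of hybrid superoperators
$$Y_\phi = \Big(\prod_{i=\phi+1}^K T^j_i\Big)\, T^j_{\sum_{z=1}^{\phi} \mathcal{L}_{z}}, \qquad \phi = 1,\ldots,K,$$
where the $i$-product is ordered with larger indices standing to the left (and is empty, i.e.\ the identity, when $\phi = K$). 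With this convention the empty product gives $Y_K = T^j_{\sum_{z=1}^K \mathcal{L}_z} = T^j_{\mathcal{L}}$, while for $\phi = 1$ one has $T^j_{\mathcal{L}_1} = T^j_1$, so $Y_1 = \prod_{i=1}^K T^j_i$. Hence the quantity to be bounded is exactly $||Y_K - Y_1||$.

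Next I would telescope $Y_K - Y_1 = \sum_{\phi=2}^K (Y_\phi - Y_{\phi-1})$ and observe that consecutive terms differ only in whether the $\phi$-th contribution has been propagated jointly or separately, which yields the clean factorisation
$$Y_\phi - Y_{\phi-1} = \Big(\prod_{i=\phi+1}^K T^j_i\Big)\Big(T^j_{\sum_{z=1}^{\phi} \mathcal{L}_{z}} - T^j_\phi\, T^j_{\sum_{z=1}^{\phi-1} \mathcal{L}_{z}}\Big).$$
Applying the triangle inequality together with submultiplicativity of the $(1\to 1)$-norm then bounds $||Y_K - Y_1||$ by $\sum_{\phi=2}^K ||\prod_{i=\phi+1}^K T^j_i||\,\cdot\,||T^j_{\sum_{z=1}^{\phi}\mathcal{L}_z} - T^j_\phi T^j_{\sum_{z=1}^{\phi-1}\mathcal{L}_z}||$.

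The crucial step, and the point at which the generalisation beyond the locally divisible setting of Ref.~\cite{Kliesch2011} enters, is the uniform bound on the partial-product norms $||\prod_{i=\phi+1}^K T^j_i|| \leq \prod_{i=\phi+1}^K ||T^j_i||$. Here I would invoke Lemma~\ref{productintegrals}: each local propagator that happens to be a quantum channel contributes a norm of exactly $1$, whereas each non-channel factor contributes at most $e^{\beta \Delta t}$ by the definition of $\beta$. Since $\hat{N}^m_j = \sum_{i=1}^K \mathrm{Ch}(T^j_i)$ counts precisely the number of non-channel local propagators in the $j$-th slice, at most $\hat{N}^m_j$ of the factors in any such partial product are non-channels, giving $||\prod_{i=\phi+1}^K T^j_i|| \leq e^{\beta \hat{N}^m_j \Delta t}$ uniformly in $\phi$. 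Pulling this common factor out of the sum, bounding the remaining $K-1$ summands by their maximum, and using $K-1 \leq K$ produces the claimed inequality.

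The main obstacle is exactly this last counting argument: in the divisible case every $T^j_i$ is a channel and the partial products are trivially bounded by $1$, but in the indivisible case one must isolate the exponential growth to the non-channel factors alone and charge it to the indivisibility measure $\hat{N}^m_j$, rather than settling for the crude universal bound $e^{K\beta\Delta t}$. Everything else is a routine telescope-and-triangle-inequality estimate.
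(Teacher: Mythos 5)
Your proposal is correct and follows essentially the same route as the paper's proof: the telescoping over the hybrid operators $Y_\phi$ is exactly the iterated peel-off-and-triangle-inequality expansion in the paper, and the final step of charging a factor $e^{\beta\Delta t}$ to each of the at most $\hat{N}^m_j$ non-channel propagators (via Lemma~\ref{productintegrals}) is precisely how the paper obtains the $e^{\beta \hat{N}^m_j \Delta t}$ prefactor. The only cosmetic difference is that the paper first enlarges each partial product $\prod_{i=\phi+1}^K ||T^j_i||$ to the full product $\prod_{i=1}^K ||T^j_i||$ before counting, whereas you bound each partial product directly; both yield the same uniform estimate.
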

Defining $\chi \equiv ||T^j_\mathcal{L} - \prod_{i = 1}^K T^j_i||$ for notational convenience, the proof proceeds as follows:

\begin{widetext}
 
\begin{proof} Taking proper account of the presence of non-CP strictly $k$-local propagators, we find that
\begin{align}
\chi &= ||T^j_{\mathcal{L}_K + \sum_{z = 1}^{K-1}\mathcal{L}_z} - T^j_K\prod_{i = 1}^{K-1}T^j_i || \label{prli1}\\
& \leq || T^j_{\mathcal{L}_K + \sum_{z = 1}^{K-1}\mathcal{L}_z} - T^j_KT^j_{\sum_{z = 1}^{K-1}\mathcal{L}_z}  + T^j_KT^j_{\sum_{z = 1}^{K-1}\mathcal{L}_z} - T^j_K\prod_{i = 1}^{K-1}T^j_i || \\
& \leq || T^j_{\mathcal{L}_K + \sum_{z = 1}^{K-1}\mathcal{L}_z} - T^j_KT^j_{\sum_{z = 1}^{K-1}\mathcal{L}_z}   || + || T^j_K || ||  T^j_{\sum_{z = 1}^{K-1}\mathcal{L}_z} - \prod_{i = 1}^{K-1}T^j_i   ||\label{prli1a} \\
& \leq \sum_{\phi = 2}^K\bigg[ \bigg( T^j_{\sum_{z = 1}^{\phi}\mathcal{L}_{z}} - T^j_{\phi}T^j_{\sum_{z = 1}^{\phi - 1}\mathcal{L}_{z}} \bigg)  \bigg(\prod_{i = \phi + 1}^K||T^j_i|| \bigg)\bigg] \label{prli2}\\
& \leq \bigg[ \sum_{\phi = 2}^K\Big(\prod_{i = \phi + 1}^K||T^j_i|| \Big)  \bigg]  \bigg[ \max_{2 \leq \phi \leq K}||T^j_{\sum_{z = 1}^{\phi}\mathcal{L}_{z}} - T^j_{\phi}T^j_{\sum_{z = 1}^{\phi - 1}\mathcal{L}_{z}}   || \bigg] \\
& \leq K \bigg[ \prod_{i = 1}^K||T^j_i||\bigg] \bigg[ \max_{2 \leq \phi \leq K}||T^j_{\sum_{z = 1}^{\phi}\mathcal{L}_{z}} - T^j_{\phi}T^j_{\sum_{z = 1}^{\phi - 1}\mathcal{L}_{z}}   ||\bigg] \label{prli3}\\
& \leq \Big(Ke^{\beta \hat{N}^m_j\Delta t}\Big) \Big(\max_{2 \leq \phi \leq K}||T^j_{\sum_{z = 1}^{\phi}\mathcal{L}_{z}} - T^j_{\phi}T^j_{\sum_{z = 1}^{\phi - 1}\mathcal{L}_{z}}   ||\Big) \label{prli4}
\end{align}

Note that line \eqref{prli2} follows from line \eqref{prli1a} by comparing the last norm on line \eqref{prli1a} with the norm on line \eqref{prli1} and iterating. Similarly, line \eqref{prli4} follows from line \eqref{prli3} via Lemma \ref{productintegrals}, the definition of $\hat{N}^m_j$ and the definition of $\beta$.
\end{proof}
We now focus our attention on bounding the quantity $||T^j_{\sum_{z = 1}^{\phi}\mathcal{L}_{z}} - T^j_{\phi}T^j_{\sum_{z = 1}^{\phi - 1}\mathcal{L}_{z}}   ||$. To this end we use the following Lemma:

\begin{lemma}\label{FTOC}
Given two arbitrary time-dependent Liouvillians $\mathcal{K}$ and $\mathcal{L}$ the following relationship holds

\begin{equation}
|| T_{\mathcal{K} + \mathcal{L}}(t,s) - T_{\mathcal{K}}(t,s)T_{\mathcal{L}}(t,s)  || \leq \frac{1}{2}(t-s)^2\bigg[ \sup_{s \leq \mu \leq r \leq t} || \big[ \mathcal{K}(u),\mathcal{L}(r)\big] || \bigg]\bigg[e^{\big[(t-s)\big(3 \sup_{s \leq \nu \leq t}||  \mathcal{K}(\nu)|| + 2 \sup_{s \leq \nu \leq t}||  \mathcal{L}(\nu)|| \big)\big] } \bigg].
\end{equation}
\end{lemma}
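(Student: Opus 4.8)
The bound is a first-order Trotter estimate for time-dependent generators, and the label FTOC signals that the natural route is through the integral (Duhamel) representations supplied by the fundamental theorem of calculus. The plan is to work with the difference $D(r) \equiv T_{\mathcal{K}+\mathcal{L}}(r,s) - T_{\mathcal{K}}(r,s)T_{\mathcal{L}}(r,s)$, which vanishes at $r=s$, and to derive an inhomogeneous equation of motion for it. Differentiating the product $T_{\mathcal{K}}(r,s)T_{\mathcal{L}}(r,s)$ by the Leibniz rule and invoking the initial value problems \eqref{IVP} and \eqref{IVPlocal}, its derivative equals $\mathcal{K}(r)T_{\mathcal{K}}T_{\mathcal{L}} + T_{\mathcal{K}}\mathcal{L}(r)T_{\mathcal{L}}$; adding and subtracting $(\mathcal{K}(r)+\mathcal{L}(r))T_{\mathcal{K}}(r,s)T_{\mathcal{L}}(r,s)$ isolates $D(r)$ and leaves a residual which rearranges into a single commutator, giving
\[
\frac{d}{dr}D(r) = \big(\mathcal{K}(r)+\mathcal{L}(r)\big)D(r) + \big[\mathcal{L}(r),T_{\mathcal{K}}(r,s)\big]T_{\mathcal{L}}(r,s), \qquad D(s)=0.
\]

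Solving this by variation of parameters, using the propagator $T_{\mathcal{K}+\mathcal{L}}(t,r)$ of the homogeneous part (whose existence and norm bound are guaranteed by Lemma \ref{productintegrals}), yields the closed representation
\[
D(t) = \int_s^t T_{\mathcal{K}+\mathcal{L}}(t,r)\,\big[\mathcal{L}(r),T_{\mathcal{K}}(r,s)\big]\,T_{\mathcal{L}}(r,s)\,dr.
\]
The crucial observation is that the commutator $[\mathcal{L}(r),T_{\mathcal{K}}(r,s)]$ vanishes at $r=s$ and is therefore of first order in $(r-s)$; it is precisely this extra length factor, combined with the outer integration, that produces the quadratic prefactor $\frac{1}{2}(t-s)^2$.

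The main obstacle, and the only genuinely nontrivial step, is obtaining a quantitative bound on this commutator. I would handle it by a second, nested application of the same idea: fixing $r$ and setting $\Phi(u) \equiv [\mathcal{L}(r),T_{\mathcal{K}}(u,s)]$, one differentiates in $u$ to find $\dot\Phi(u) = \mathcal{K}(u)\Phi(u) + [\mathcal{L}(r),\mathcal{K}(u)]T_{\mathcal{K}}(u,s)$ with $\Phi(s)=0$, since $[\mathcal{L}(r),\mathds{1}]=0$. Variation of parameters once more, now with the homogeneous propagator $T_{\mathcal{K}}$, gives
\[
\big[\mathcal{L}(r),T_{\mathcal{K}}(r,s)\big] = \int_s^r T_{\mathcal{K}}(r,v)\,\big[\mathcal{L}(r),\mathcal{K}(v)\big]\,T_{\mathcal{K}}(v,s)\,dv,
\]
which exposes the commutator of the generators. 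Since $v\leq r$ throughout, $||[\mathcal{L}(r),\mathcal{K}(v)]||$ is bounded uniformly by the ordered commutator supremum $\sup_{s\leq\mu\leq\rho\leq t}||[\mathcal{K}(\mu),\mathcal{L}(\rho)]||$ appearing in the statement.

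It then remains to assemble the estimate by taking norms and using submultiplicativity together with Lemma \ref{productintegrals}, which bounds each propagator norm by the exponential of $(t-s)$ times the supremum of the relevant generator norm over $[s,t]$. The two elementary length integrals $\int_s^r dv = (r-s)$ and $\int_s^t (r-s)\,dr = (t-s)^2/2$ supply the prefactor $\frac{1}{2}(t-s)^2$ and the commutator supremum. Finally, tallying the exponential factors — two copies of $\sup||\mathcal{K}||$ from the pair of $T_{\mathcal{K}}$ propagators inside the commutator, one further copy from the $T_{\mathcal{K}+\mathcal{L}}$ factor, and two copies of $\sup||\mathcal{L}||$ from $T_{\mathcal{L}}$ and from $T_{\mathcal{K}+\mathcal{L}}$, each bounded crudely by its value over the whole interval $[s,t]$ — reproduces exactly the exponent $(t-s)\big(3\sup||\mathcal{K}|| + 2\sup||\mathcal{L}||\big)$ claimed in the lemma. (I note in passing that exploiting the concatenation $T_{\mathcal{K}}(r,v)T_{\mathcal{K}}(v,s)$ rather than bounding each factor separately would yield a tighter exponent $(t-s)(\sup||\mathcal{K}||+\sup||\mathcal{L}||)$, so the stated bound certainly holds.)
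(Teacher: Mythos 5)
Your argument is correct and lands on the same core structure as the paper's proof: a double time integral over the generator commutator $[\mathcal{K}(\mu),\mathcal{L}(r)]$ sandwiched between propagators, whose norms are then controlled by Lemma \ref{productintegrals}, with the two nested length integrals supplying the factor $\tfrac{1}{2}(t-s)^2$. Where you differ is in how the integral representation is obtained. The paper quotes from Kliesch et al. the identity
\begin{equation*}
T_{\mathcal{K}+\mathcal{L}}(t,s)-T_{\mathcal{K}}(t,s)T_{\mathcal{L}}(t,s)=\int_s^t dr\int_s^r d\mu\, T_{\mathcal{K}}(t,s)T_{\mathcal{L}}(t,r)T^{-1}_{\mathcal{K}}(\mu,s)\big[\mathcal{L}(r),\mathcal{K}(\mu)\big]T^{-1}_{\mathcal{K}}(r,\mu)T_{\mathcal{K}+\mathcal{L}}(r,s),
\end{equation*}
which involves the inverse propagators $T^{-1}_{\mathcal{K}}$ and therefore needs the invertibility statement and the norm bound on inverses from Lemma \ref{productintegrals}. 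You instead derive the representation from scratch by two nested applications of variation of parameters: first for $D(r)=T_{\mathcal{K}+\mathcal{L}}(r,s)-T_{\mathcal{K}}(r,s)T_{\mathcal{L}}(r,s)$, then for $\Phi(u)=[\mathcal{L}(r),T_{\mathcal{K}}(u,s)]$, arriving at
\begin{equation*}
D(t)=\int_s^t dr\int_s^r dv\, T_{\mathcal{K}+\mathcal{L}}(t,r)T_{\mathcal{K}}(r,v)\big[\mathcal{L}(r),\mathcal{K}(v)\big]T_{\mathcal{K}}(v,s)T_{\mathcal{L}}(r,s).
\end{equation*}
This buys you two things: the argument is self-contained (no appeal to the external identity, no inverse propagators), and, as you correctly observe, concatenating the adjacent $T_{\mathcal{K}}$ factors and tracking the actual integration ranges gives the tighter exponent $(t-s)(\sup\lVert\mathcal{K}\rVert+\sup\lVert\mathcal{L}\rVert)$; bounding each of the five propagators crudely over all of $[s,t]$ reproduces exactly the paper's $3\sup\lVert\mathcal{K}\rVert+2\sup\lVert\mathcal{L}\rVert$. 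Your ODE for $D$, its vanishing initial condition, and the identification of the commutator ordering $v\le r$ with the ordered supremum in the statement all check out, so the proof is complete as written.
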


\begin{proof}
For notational convenience let us define $\zeta = || T_{\mathcal{K} + \mathcal{L}}(t,s) - T_{\mathcal{K}}(t,s)T_{\mathcal{L}}(t,s)  ||$. As shown in Ref. \cite{Kliesch2011}, using the fundamental theorem of calculus allows one to obtain

\begin{equation}
\zeta = \Big|\Big| \int_{s}^tdr\int_s^r d\mu T_{\mathcal{K}}(t,s)T_{\mathcal{L}}(t,r)T^{-1}_{\mathcal{K}}(\mu,s)\big[\mathcal{L}(r),\mathcal{K}(\mu)\big] T^{-1}_{\mathcal{K}}(r,\mu)T_{\mathcal{K} + \mathcal{L}}(r,s) \Big| \Big|,
\end{equation}
from which, using Lemma \ref{productintegrals}, submultiplicativity of the $1\rightarrow 1$ norm and the triangle inequality, and again noting that not all propagators are necessarily quantum channels, it follows that,

\begin{align}
\zeta &  \leq \int_{s}^tdr\int_s^r d\mu \Bigg( \big|\big| \big[\mathcal{L}(r),\mathcal{K}(\mu)\big]  \big|\big| \mathrm{exp}\Big[\Big(\int_s^t || \mathcal{K}(\nu) ||d\nu\Big) + \Big(\int_r^t || \mathcal{L}(\nu) ||d\nu\Big) \nonumber \\ &\qquad\qquad\qquad\qquad + \Big(\int_\mu^r || \mathcal{K}(\nu) ||d\nu\Big) + \Big(\int_s^r || \mathcal{K}(\nu) + \mathcal{L}(\nu) ||d\nu\Big)\Big] \Bigg)\\
& \leq \int_{s}^t dr\int_s^r d\mu \Bigg( \big|\big| \big[\mathcal{L}(r),\mathcal{K}(\mu)\big]  \big|\big| \mathrm{exp}\Big[\int_s^t \big(3||\mathcal{K}(\nu)|| + 2||\mathcal{L}(\nu)||\big)d\nu \Big]\Bigg) \\
& \leq \Bigg( \int_{s}^t dr\int_s^r d\mu \big|\big| \big[\mathcal{L}(r),\mathcal{K}(\mu)\big]  \big|\big| \Bigg) \mathrm{exp}\Big[\int_s^t \big(3||\mathcal{K}(\nu)|| + 2||\mathcal{L}(\nu)||\big)d\nu \Big] \\
& \leq \frac{1}{2}(t-s)^2\bigg[ \sup_{s \leq \mu \leq r \leq t} || \big[ \mathcal{K}(u),\mathcal{L}(r)\big] || \bigg]\bigg[e^{\big[(t-s)\big(3 \sup_{s \leq \nu \leq t}||  \mathcal{K}(\nu)|| + 2 \sup_{s \leq \nu \leq t}||  \mathcal{L}(\nu)|| \big)\big] } \bigg].
\end{align}
\end{proof}
\end{widetext}
Applying Lemma \ref{FTOC} to the special case of the norm $||T^j_{\sum_{z = 1}^{\phi}\mathcal{L}_{z}} - T^j_{\phi}T^j_{\sum_{z = 1}^{\phi - 1}\mathcal{L}_{z}}||$, then yields the corollary,

\begin{corollary}\label{colnorm}
Using the notation and setting of Sections \ref{setting} and \ref{trottersection}, we have that, for all $2 \leq \phi \leq K$, the following inequality holds

\begin{equation}\nonumber
||T^j_{\sum_{z = 1}^{\phi}\mathcal{L}_{z}} - T^j_{\phi}T^j_{\sum_{z = 1}^{\phi - 1}\mathcal{L}_{z}}||    \leq (K\beta^2)(\Delta t)^2 e^{(3 + 2K)\beta \Delta t}.
\end{equation}
\end{corollary}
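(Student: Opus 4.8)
The plan is to obtain the bound as a direct specialisation of Lemma \ref{FTOC}. The key first move is to choose the assignment $\mathcal{K} = \mathcal{L}_\phi$ and $\mathcal{L} = \sum_{z=1}^{\phi-1}\mathcal{L}_{z}$, with the lemma applied on the $j$th subinterval $[t(j-1)/m,\, tj/m]$ so that $(t-s) = \Delta t$. Under this identification one has $\mathcal{K} + \mathcal{L} = \sum_{z=1}^{\phi}\mathcal{L}_{z}$, so that $T_{\mathcal{K}+\mathcal{L}}(t,s)$ becomes $T^j_{\sum_{z=1}^{\phi}\mathcal{L}_{z}}$ and $T_{\mathcal{K}}(t,s)T_{\mathcal{L}}(t,s)$ becomes $T^j_{\phi}T^j_{\sum_{z=1}^{\phi-1}\mathcal{L}_{z}}$, which is exactly the quantity to be bounded. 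All that then remains is to estimate the commutator supremum and the exponent appearing in Lemma \ref{FTOC} in terms of $\beta$, $K$ and $\Delta t$.

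For the commutator factor I would use submultiplicativity together with the triangle inequality for the $(1\to1)$ norm to write $||[\mathcal{L}_\phi(\mu),\, \sum_{z=1}^{\phi-1}\mathcal{L}_{z}(r)]|| \leq 2\, ||\mathcal{L}_\phi(\mu)||\cdot ||\sum_{z=1}^{\phi-1}\mathcal{L}_{z}(r)|| \leq 2(\phi-1)\beta^2$, since each $||\mathcal{L}_i(s)|| \leq \beta$ by the definition of $\beta$ and the inner sum contains at most $\phi - 1 \leq K$ terms. Taking the supremum over $s \leq \mu \leq r \leq t$ preserves this uniform estimate. Combined with the overall prefactor of Lemma \ref{FTOC}, this contributes $\tfrac{1}{2}(\Delta t)^2 \cdot 2(\phi-1)\beta^2 = (\phi-1)\beta^2(\Delta t)^2 \leq K\beta^2(\Delta t)^2$.

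For the exponent I would exploit the asymmetry built into Lemma \ref{FTOC}: the single Liouvillian $\mathcal{K} = \mathcal{L}_\phi$ carries the coefficient $3$ and contributes $3\sup||\mathcal{L}_\phi|| \leq 3\beta$, whereas the potentially large sum $\mathcal{L} = \sum_{z=1}^{\phi-1}\mathcal{L}_{z}$ carries only the coefficient $2$ and contributes $2\sup||\sum_{z=1}^{\phi-1}\mathcal{L}_{z}|| \leq 2(\phi-1)\beta \leq 2K\beta$. The exponent is therefore bounded by $(3+2K)\beta\Delta t$, uniformly in $2 \leq \phi \leq K$. Multiplying the two estimates gives precisely the claimed $(K\beta^2)(\Delta t)^2 e^{(3+2K)\beta\Delta t}$.

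I do not expect a genuine obstacle here: the analytic content was already delivered by Lemma \ref{FTOC}, and this corollary is a careful bookkeeping step. The one point that actually demands care is matching the ordering $T^j_{\phi}T^j_{\sum_{z=1}^{\phi-1}\mathcal{L}_{z}}$ against $T_{\mathcal{K}}T_{\mathcal{L}}$, so that the single term $\mathcal{L}_\phi$ (bounded by $\beta$) is assigned to $\mathcal{K}$ and the large sum (bounded by $(\phi-1)\beta$) is assigned to $\mathcal{L}$; this is what lets the milder factor-$2$ coefficient act on the larger object and keeps the exponent at $(3+2K)\beta\Delta t$. I would also remark that the commutator estimate deliberately discards the strict $k$-locality of the $\mathcal{L}_z$ (under which many of these commutators in fact vanish), so the resulting bound is a uniform worst-case one that holds independently of the lattice geometry.
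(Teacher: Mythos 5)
Your proposal is correct and is exactly the argument the paper intends: the paper simply states that Corollary \ref{colnorm} follows by applying Lemma \ref{FTOC} to this special case, and your instantiation $\mathcal{K}=\mathcal{L}_\phi$, $\mathcal{L}=\sum_{z=1}^{\phi-1}\mathcal{L}_z$ (forced by the ordering $T^j_\phi T^j_{\sum_{z=1}^{\phi-1}\mathcal{L}_z}$), together with the estimates $||[\mathcal{L}_\phi(\mu),\sum_{z<\phi}\mathcal{L}_z(r)]||\leq 2(\phi-1)\beta^2$ and $3\beta+2(\phi-1)\beta\leq(3+2K)\beta$, supplies precisely the omitted bookkeeping. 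No gaps.
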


Theorem \ref{SLT} now follows straightforwardly as a consequence of Corollary \ref{col3}, Lemma \ref{lemint} and Corollary \ref{colnorm}. Finally, we provide a proof of Corollary \ref{colSLT}.

\begin{proof}[Proof (Corollary \ref{colSLT}).] Assume that 

\begin{equation}\label{ass1}
0 \leq \epsilon \leq \frac{(2K^2\beta t \mathrm{ln}(2)e^{(K + K^2)t^{\mathrm{ID}}\beta})}{(3 + [3 + \tilde{C}K + \tilde{C}K^2])},
\end{equation}
and

\begin{equation}\label{ass2}
m \geq 2K\beta^2 t^2e^{(K + K^2)t^{\mathrm{ID}}\beta}/\epsilon.
\end{equation}
It follows from Theorem \ref{SLT} and assumptions \eqref{ass1} and \eqref{ass2} that

\begin{align}
\xi &\leq \frac{K^2\beta^2t^2}{m}e^{(3 + [3 + \tilde{C}]K + \tilde{C}K^2)\beta(t/m)}e^{(K + K^2)t^{\mathrm{ID}}\beta} \\
& \leq \frac{\epsilon}{2}\mathrm{exp}\Big[\frac{\big( 3 + [3 + \tilde{C}]K + \tilde{C}K^2  \big)\epsilon}{2 K^2\beta t e^{(K + K^2)t^{\mathrm{ID}}\beta} }\Big] \\
& \leq \frac{\epsilon}{2}e^{\mathrm{ln}(2)} \\
&\leq \epsilon.
\end{align}

\end{proof}

\end{appendix}

\end{document}